\newcommand{\ket}[1]{\left|#1\right\rangle}
\newcommand{\Tr}{\textnormal{tr}}
\newcommand{\Supp}[1]{\textnormal{Supp}(#1)}
\newcommand{\tran}[1]{T_{B}^{BS_{#1}}}
\newcommand{\trant}[2]{T_{B}^{BS_{[#1 , #2]}}}
\newcommand{\ttran}[1]{\tilde{T}_{B}^{BS_{#1}}}
\newcommand{\ttrant}[2]{\tilde{T}_{B}^{BS_{[#1 , #2]}}}
\newcommand{\trantd}[2]{T_{B}^{BS_{[#1 , #2]}*}}
\newcommand{\ttrantd}[2]{\tilde{T}_{B}^{BS_{[#1 , #2]}*}}
\newcommand{\U}[2]{\mathcal{U}_{[#1, #2]}}
\newcommand{\tU}[2]{\tilde{\mathcal{U}}_{[#1, #2]}}
\newcommand{\Ohm}[2]{\omega^{S_{[#1, #2]} S_{[#1, #2]}'}}
\newcommand{\tOhm}[2]{\tilde{\omega}^{S_{[#1, #2]} S_{[#1, #2]}'}}
\newtheorem{lem}{Lemma}
\newtheorem{thm}{Theorem}
\newtheorem{defi}{Definition}
\begin{document}
\title{Noise-resilient preparation of quantum many-body ground states}
\author{Isaac H. Kim}
\affiliation{IBM T.J. Watson Research Center}
\date{\today}

\begin{abstract}
  Certain quantum many-body ground states can be prepared by a large-depth quantum circuit consisting of geometrically local gates. In the presence of noise, local expectation values deviate from the correct value at most by an amount comparable to the noise rate. This happens if the action of the noiseless circuit, restricted to certain subsystems, rapidly mixes local observables up to a small correction. The encoding circuit for the surface code is given as an example.
\end{abstract}
\maketitle
\section{Introduction\label{section:Introduction}}
Recently, we proposed a method to simulate a two-dimensional quantum many-body system from a one-dimensional digital quantum simulator\cite{Kim2017a}. A method to find a representation of the ground state wavefunction was proposed. It was shown that ground states of massive systems can be reliably approximated by such a representation. We also gave a heuristic explanation on why the method would be resilient to noise. It was argued that noise rate of $\epsilon$ would only affect local observables by an amount proportional to $\epsilon$. This is a nontrivial claim, because the depth of the underlying circuit scales linearly with the length scale; naively, the error bound would scale with the system size.

The purpose of this paper is to formulate a condition under which such a statement is rigorously true, albeit in a slightly weaker form than what was claimed in Ref.\cite{Kim2017a}. We show that certain ground states can be prepared by a quantum circuit that is \emph{noise-resilient.} This means that, if the circuit obeys a certain condition, the expectation values of local observables are perturbed at most by a function $f(\epsilon)$ which is uniformly bounded in all system size, such that it vanishes in the $\epsilon \to 0$ limit. The choice of this function in this paper is $f(x) = \mathcal{O}(x\log^2 x)$, but of course, other functions with the same property would be able to serve the same purpose.

We embarked on this study to overcome the existing challenges in simulating strongly interacting quantum many-body systems. To name a few, quantum Monte Carlo method suffers from the infamous sign problem, which is unlikely to be resolved\cite{Troyer2005}. While the density matrix renormalization group method\cite{White1992} has enjoyed a tremendous amount of success, its higher-dimensional generalization\cite{Verstraete2004a} cannot be efficiently contracted in general\cite{Schuch2007}. The network can be approximately contracted ``efficiently,'' meaning that the contraction time scales polynomially with the underlying variational parameter\cite{Lubasch2014}. However, the degree of the polynomial is quite large in practice, and this hinders its practical application. Another variant, the multi-scale entanglement renormalization ansatz\cite{Vidal2008}, can be exactly contracted in an efficient manner. However, again the degree of the polynomial is too large to be deemed practical\cite{Evenbly2009}. The main computational bottlenecks behind these methods come from elementary linear algebra operations such as singular value decomposition, which are unlikely to be significantly sped up in near term. While this is not as fundamental as the sign problem, it is important to remember that (i) the memory requirement of these methods scale as a large polynomial of the variational parameter, which can be problematic and (ii) the floating-point operations per second per core has not improved significantly over a prolonged period of time. Without a breakthrough technology that can overcome these issues, further progress seems unlikely.\footnote{However, there might be a more efficient method that can leverage the power of modern graphics processing unit.}

On the quantum side, in principle, a fault-tolerant quantum computer\cite{Shor1996} would be able to overcome these problems\cite{Lloyd1996}. While noise rates below the fault tolerance threshold have been reported\cite{Benhelm2008,Barends2014}, a successful demonstration of fault-tolerance\cite{Gottesman2016} is yet to be seen. To make matters worse, the associated overhead is expected to be large\cite{Fowler2012}. Furthermore, there is another layer of overhead associated to the generation of so called magic states\cite{Bravyi2005}, which adds at least another order of magnitude. All in all, with the existing approaches, to make progress, it seems necessary either to overcome multiple levels of challenges or to develop a breakthrough technology.

Compared to these methods, our proposal\cite{Kim2017a} has a number of advantages. First, because local expectation values are resilient to noise, as long as we only ask questions about such expectation values, we do not need to perform error correction. This subsequently implies that there is no overhead associated to error correction. Indeed, various quantities of interest, e.g., energy per site and local order parameter, are often experimentally measured up to a fixed precision which is independent of the system size. If our goal is to simulate such properties of the physical system, a constant precision would be often good enough. Second, as explained in Ref.\cite{Kim2017a}, the proposed one-dimensional architecture already exists\cite{Kim2010,Barends2014}, and there does not seem to be much fundamental difficulty in scaling it up. Third, on the software side, the time to measure local quantities, e.g., local magnetization, scales as $\mathcal{O}(\ell D /\delta^2)$, where $\ell$ is the length scale of the underlying system, $D$ is the depth of the circuit, and $\delta$ is the desired precision; here we expect $D=\mathcal{O}(1)$ for physical systems of interest. The linear dependence on $\ell$ is not only efficient in a complexity-theoretic sense, but also in a practical sense. The factor of $1/\delta^2$ can be nonnegligible, but this comes from a statistical sampling error, which can be removed by running the simulation in parallel with many devices. We believe these qualities make our proposal a viable option for simulating strongly interacting quantum many-body systems in near-term quantum devices.

While this paper was motivated from the practical problem of simulating quantum many-body systems, from the perspective of quantum error correction, it is amusing to note that there is a nontrivial method to bound the effect of error even when the number of circuit elements that influence the relevant observables diverges. The analysis, by definition, has to be completely different from the error analysis in fault tolerant quantum computation, wherein the effect of noise is bounded by counting and weighting different fault paths\cite{Aharonov1996}. This is an interesting phenomenon on its own, and certainly unexpected from the perspective of quantum error correction. What made this possible? The short answer would be that the physical nature of the simulated system makes the simulation noise-resilient. The condition that we formulated seems mild enough that it seems to be applicable to many physical systems of interest. Indeed, we list a few examples in Section \ref{section:examples}.

This condition, which concerns the stability of dissipative discrete-time local dynamics, was inspired from an interesting recent work of Lucia et al.\cite{Cubitt2013,Lucia2014}. They were able to prove that local expectation values of the steady state of a dissipative system is perturbed at most by an amount comparable to the perturbation strength, provided that the underlying (unperturbed) dynamics rapidly equilibrates local observables. This result is noteworthy; it is normally difficult to bound the effect of perturbation at long-time scale in a controlled manner, yet their work precisely achieves that. Modulo some minor subtleties, their argument was general enough that it could be applied to our work.

The observation that the stability of dissipative system can be used to prove the noise-resilience of large-depth quantum simulation suggests that there may be circumstances in which error correction becomes unnecessary. This shows that we can also make progress by looking for physical properties that can be reliably replicated in a noisy quantum device, instead of merely trying to reduce the noise by error correction. We hope this work initiates further studies in this direction. 

The main result of this paper is summarized in Section \ref{section:summary}. In Section \ref{section:ref}, we discuss notations and facts that are frequently used in this paper. Through Section \ref{section:Heisenberg} and Section \ref{section:local_uniform_contraction}, we derive a bound on the difference between the local expectation values of noiseless and noisy quantum circuit. In Section \ref{section:examples} we discuss nontrivial applications of this bound. We close with several open questions in Section \ref{section:Discussion}.  

\section{Summary\label{section:summary}}
As in the conventional circuit model of quantum computation, our proposal\cite{Kim2017a} involves quantum gates, measurements, and state preparations. None of these processes can be assumed to be perfect. We assume that they are perturbed by a noise of strength  $\epsilon$. Up to a polylogarithmic factor in $\epsilon$, the noise strength bounds the deviation of noisy local expectation values from their true values.

This bound is derived for certain circuits that can be implemented in holographic quantum simulation\cite{Kim2017a}. We first derive a bound that holds for any such circuit, and then apply the bound to certain examples. The main result of this paper is Theorem \ref{thm:local_rapid_mixing_stability}, which is explained in Section \ref{section:main} in colloquial terms. Below, we begin by reviewing the structure of circuits that can be implemented in holographic quantum simulation. We then formalize our assumptions on noise, state the main result, and sketch the proof.

\subsection{Holographic Quantum Simulation\label{section:HQS}}
Recently, we proposed a method to simulate a two-dimensional(2D) system from a one-dimensional(1D) quantum device. The qubits of the 1D device are classified into the bath, sink, and the system qubits; see FIG.\ref{fig:architecture}. The state of the 2D system is represented by a history of the 1D device:
\begin{equation}
  \langle O_{(x_1,y_1)}\cdots O_{(x_n,y_n)}\rangle_{\text{2D}} := \langle O_{x_1}(y_1) \cdots O_{x_n}(y_n)\rangle_{\text{1D}}, \label{eq:1D_2D}
\end{equation}
where $\langle \cdots \rangle_{\text{2D}}$ is the expectation value of the observables over the state of the 2D system and $\langle \cdots \rangle_{\text{1D}}$ is the (time-dependent) expectation value of the observables over the history of the 1D system. Here $O_{(x_1,y_1)}$ is an observable on the qubit at location $(x,y)$ of the 2D system and $O_x(y)$ is an observable on the $(x)$th system qubit of the 1D device at time $y$.
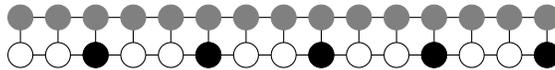
\begin{figure}[h]
  \begin{tikzpicture}
    \draw (0.5,0.5)--(7.5,0.5);
    \draw (0.5,0)--(7.5,0);
    \foreach \i in {1,...,15}
        {\draw (0.5*\i,0.0)--(0.5*\i,0.5);}
    \foreach \i in {1,...,15}
        {\node[circle,draw=gray,fill=gray](\i) at (0.5 * \i, 0.5) {};}
    \foreach \i in {1,...,15}
        {\node[circle,draw=black,fill=white](\i) at (0.5 * \i, 0.0) {};}            
    \foreach \i in {1,...,5}
        {\node[circle,draw=black,fill=black](\i) at (1.5 * \i, 0.0) {};} 
  \end{tikzpicture}
  \caption{Locality structure of the physical system that implements holographic quantum simulation. The two-qubit gates are implemented over qubits that are connected by an edge. The black qubits are the system qubits, the white qubits are the bath qubits, and the gray qubits are the sink qubits. \label{fig:architecture}}
\end{figure}

At each time step, the system and the sink qubits are initialized, a low-depth local quantum circuit consisting of nearest-neighbor gates are applied, and if necessary, the system qubits are measured. The circuit builds correlation between the bath and the system qubit, which subsequently builds correlation between system qubits at different time; in effect, the bath mediates the correlation between system qubits at different times. The initialization ensures that Eq.\ref{eq:1D_2D} defines a valid quantum mechanical state.

Intuitively, one can view this process as a 1D system ``gliding'' over a 2D system; see FIG.\ref{fig:gliding}. The 2D system, which can be thought as a collection of system qubits at different times, are initialized to some fixed product state. The 1D system, which can be thought as the bath-sink composite, glides over the 2D system and sequentially interacts with different rows of system qubits. After the 1D system reaches the last row, a desired 2D state is prepared. While one can certainly perform measurements at this point, it is possible to perform arbitrary measurements while keeping just one row of system qubits; see Ref.\cite{Kim2017a} for detail.
\begin{figure}[h]
  \begin{tikzpicture}[xscale=0.5,yscale=0.4]
    \draw[<->] (1,-0.5,15)--(7,-0.5,15) node [midway, below] {$\ell_x$};
    \foreach \i in {1}
        \foreach \j in {1,...,15}
                {\draw (\i,0.5,\j)--(\i,0,\j);}
    \draw (1,0.5,1)--(1,0.5,15);
    \foreach \i in {1,...,7}            
        {\draw[dashed] (\i,0,3)--(\i,0,15);}
    \foreach \i in {1,...,5}
        {\draw[dashed] (1,0,3*\i)--(7,0,3*\i);}    
    \foreach \i in {1}
        {\draw (\i,0,1)--(\i,0,15);}
    \foreach \i in {1}
        \foreach \j in {1,...,15}
                {\node[circle,draw=black,fill=white,scale=0.4](\i) at (\i, 0, \j) {};}
    \foreach \i in {1,...,7}
        \foreach \j in {1,...,5}
                {\node[circle,draw=black,fill=black,scale=0.4](\i) at (\i,0, 3*\j) {};}
    \foreach \j in {1,...,15}
             {\node[circle,draw=gray,fill=gray,scale=0.4](\j) at (1,0.5,\j) {};}
    \draw[->] (7.75,0,7.5)--(8.25,0,7.5);
    \begin{scope}[xshift=8cm]
        \foreach \i in {2}
        \foreach \j in {1,...,15}
                {\draw (2,0.5,\j)--(\i,0,\j);}
    \foreach \i in {1,...,7}            
        {\draw[dashed] (\i,0,3)--(\i,0,15);}
    \foreach \i in {1,...,5}
        {\draw[dashed] (1,0,3*\i)--(7,0,3*\i);}
    \draw (2,0.5,1)--(2,0.5,15);
    \foreach \i in {2}
        {\draw (\i,0,1)--(\i,0,15);}
    \foreach \i in {2}
        \foreach \j in {1,...,15}
                {\node[circle,draw=black,fill=white,scale=0.4](\i) at (\i, 0, \j) {};}
    \foreach \i in {1,...,7}
        \foreach \j in {1,...,5}
                {\node[circle,draw=black,fill=black,scale=0.4](\i) at (\i,0, 3*\j) {};}
    \foreach \j in {1,...,15}
             {\node[circle,draw=gray,fill=gray,scale=0.4](\j) at (2,0.5,\j) {};}
    \draw[<->] (7.25,-0.25,3)--(7.25,-0.25,15) node [midway, right, below] {$\ell_y$};
    \end{scope}            
  \end{tikzpicture}
\caption{A 1D system is gliding over a 2D system consisting of $\ell_x \times \ell_y=7 \times 5$ qubits. Nearby qubits interact with each other. \label{fig:gliding}}
\end{figure}
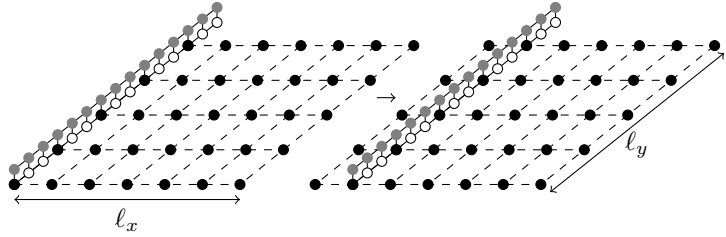

Formally speaking, the 2D state is a finitely correlated state(FCS)\cite{Fannes1992}. In modern language, such state can be represented by a matrix product density operator\cite{Verstraete2004,Zwolak2004}. While the matrix product density operator is certainly a more well-known formalism, we chose to use the language of FCS for the following reason. The matrices that define the matrix product density operator implicitly define a family of completely positive trace-preserving maps\cite{Verstraete2004}, i.e., quantum channels. In holographic quantum simulation, these channels are not some mathematical object, but rather an actual physical operation implemented in the experiment. We are interested in understanding how the experiment is affected by imperfect implementation of these channels. These channels, in our proposal, have a number of properties that we extensively use. We are unaware of any method to translate these properties into the properties of the matrices that appear in the definition of the matrix product density operator. On the other hand, FCS is formulated only in terms of channels. This is why the language of FCS is more appropriate for our analysis.

A FCS is defined in terms of the initial state of the bath($B$) and channels that involve the bath and the system at different times; the system at time $t$ is denoted as $S_t$. We shall use $S_t'$ to denote the sink at time $t$. For qubits arranged on a $\ell_x \times \ell_y$ grid, the state has the following form:
\begin{equation}
\rho = \Tr_B[T_B^{BS_{\ell_y}} \circ \cdots \circ T_B^{BS_1}(\rho^B)], \label{eq:state}
\end{equation}
where $\rho^B$ is a state of the bath, $T_B^{BS_t}$ is a \emph{transition map}, a channel from the bath to the composite of bath and system at time $t$, and $\Tr_B$ is a partial trace over the bath qubits. Here, the identity superoperators are suppressed. For example, $T_B^{BS_{\ell_y}}$ is a short-hand notation for $T_B^{BS_{\ell_y}} \otimes \mathcal{I}_{S_1\cdots S_{\ell_y-1}}$, where $\mathcal{I}_{S_1\cdots S_{\ell_y-1}}$ is the identity superoperator over $S_1,S_2,\cdots, S_{\ell_y-1}$. Physically, these channels are implemented by initializing the system and the sink to a fixed state, and then applying a circuit to the physical system. This means that
\begin{equation}
  T_B^{BS_t}(\cdot) = \Tr_{S_t'}[\mathcal{U}_t((\cdot) \otimes \omega^{S_t'} \otimes \omega^{S_t})],
  \label{eq:def1}
\end{equation}
where  $\mathcal{U}_t$ is a local depth-$D$ quantum circuit applied to $BS_t'S_t$, and $\omega_{S_t}$ is a fixed product state. That is,
\begin{equation}
\mathcal{U}_t = \mathcal{U}_t^{(D)} \circ \cdots \circ \mathcal{U}_t^{(1)},\label{eq:def2}
\end{equation}
where
\begin{equation}
\mathcal{U}_t^{(i)} = \bigotimes_j \mathcal{U}_t^{(i,j)}.\label{eq:def3}
\end{equation}
Here $\mathcal{U}_{t}^{(i,j)}$ is a nearest-neighbor two qubit gate; see FIG.\ref{fig:architecture}.

All the initial states are assumed to be a product state. That is,
\begin{equation}
  \begin{aligned}
    \rho^{B} &= \bigotimes_{i\in B} \rho^i \\
    \omega^{S_t} &= \bigotimes_{i\in S_t} \omega^i \\
    \omega^{S_t'} &= \bigotimes_{i\in S_t'} \omega^i.
  \end{aligned}
\end{equation}

Once the state is prepared, relevant observables can be measured. Since any observable can be decomposed into a linear combination of Pauli operators, it suffices to define the measurement protocol for these operators. In our proposal, for a Pauli operator over $n$ qubits, each of the qubits are measured in the eigenbasis of the Pauli operators. The average of the product of the measured values is the expectation value.

\subsection{Assumptions on noise\label{section:noise}}
 The noisy state, which shall be denoted as $\tilde{\rho}$ throughout this paper, is defined by replacing the gates and state preparations by their noisy counterparts. Specifically,
\begin{equation}
\tilde{\rho} = \Tr_B[\tilde{T}_B^{BS_{\ell_y}} \circ \cdots \circ \tilde{T}_{B}^{BS_1}(\tilde{\rho}^B)],
\end{equation}
where
\begin{equation}
\tilde{T}_B^{BS_t}(\cdot) = \Tr_{S_t'}[\mathcal{\tilde{U}}_t((\cdot) \otimes \tilde{\omega}^{S_t'} \otimes \tilde{\omega}^{S_t})].
\end{equation}
The noisy circuit($\mathcal{\tilde{U}}_t$) is related to the noiseless circuit($\mathcal{U}_t$) by the following relation:
\begin{equation}
  \begin{aligned}
    \tilde{\mathcal{U}}_t = \tilde{\mathcal{U}}_t^{(D)} \circ \cdots \circ \tilde{\mathcal{U}}_t^{(1)}\\
    \tilde{\mathcal{U}}_t^{(i)} = \bigotimes_{j} \tilde{\mathcal{U}}_t^{(i,j)},
  \end{aligned}
\end{equation}
where $\|\tilde{\mathcal{U}}_{t}^{(i,j)} - \mathcal{U}_t^{(i,j)} \|_{\diamond} \leq \epsilon$. Here $\|\cdots \|_{\diamond}$ is the so called diamond norm, also known as the completely-bounded norm\cite{Kitaev2002}.

The noisy initial states are assumed to be
\begin{equation}
  \begin{aligned}
    \tilde{\rho}^B &= \bigotimes_{i\in B} \tilde{\rho}^i\\
    \tilde{\omega}^{S_t} &= \bigotimes_{i\in S_t} \tilde{\omega}^i \\
    \tilde{\omega}^{S_t'} &= \bigotimes_{i\in S_t'} \tilde{\omega}^i,
    \end{aligned}
\end{equation}
where $\|\tilde{\rho}^i -\rho^i \|_1, \|\tilde{\omega}^i - \omega^i\|_1 \leq \epsilon$. Here $\| \cdots \|_1$ is the trace norm.

A noisy Pauli measurement is modeled by replacing a tensor product of Pauli operators into a tensor product of noisy Pauli operators. Without loss of generality, consider the following Pauli operator:
\begin{equation}
O = \bigotimes_{i=1}^n \sigma^{a_i},
\end{equation}
where $a_i\in \{I,x,y,z \}$ is the index of the Pauli operator. A noisy measurement of this operator is modeled by a noiseless measurement of the following operator:
\begin{equation}
\tilde{O} = \bigotimes_{i=1}^{n} \tilde{\sigma}^{a_i},
\end{equation}
where $\|\sigma^{a_i} - \tilde{\sigma}^{a_i} \|\leq \epsilon$ and $\|\tilde{\sigma}^{a_i} \|\leq 1$. Here $\|\cdots \|$ is the operator norm.

\subsection{Main result\label{section:main}}
The main result of this paper can be divided into two parts. The first part is a bound that holds for any quantum circuit that can be implemented in holographic quantum simulation; see Theorem \ref{thm:local_rapid_mixing_stability}. While we have not introduced all the relevant definitions yet, the content of this theorem can be summarized as follows. Consider the restriction of the transition map to the bath. If any local observable in the bath equilibrates exponentially fast in time, then local expectation values are stable to noise. 

In the second part, we apply Theorem \ref{thm:local_rapid_mixing_stability} to specific examples: the ground state of the surface code and the trivial state. Applied to these circuits, we show that
\begin{equation}
  |\Tr[\rho O - \tilde{\rho}\tilde{O}]| \leq \mathcal{O}(\|O \| \epsilon \log^2\left(1/\epsilon\right)),\label{eq:main}
\end{equation}
uniformly in $\ell_x,\ell_y$, where $O$ is an operator supported on a ball of bounded radius.

It is important to note that, while the bound concerns two different states, the main result is really a statement about the underlying circuits; these circuits are described in Section \ref{section:examples}. After all, the state $\tilde{\rho}$ is defined in terms of operations that are noisy versions of the circuit that prepares $\rho$.

There is another important subtlety: a different circuit may be able to prepare the same state, yet does not have the same kind of noise resilience. This is echoing once more that the main result concerns the circuit, not the state. Indeed, we provide explicit circuits in Section \ref{section:examples}, but they are by no means the only circuits that can prepare those states. Our main result only concerns the circuits we describe in Section \ref{section:examples}.

\subsection{Structure of the proof\label{section:proof_sketch}}
The argument consists of two parts; the first part is independent of the details of the circuit, whereas the second part crucially depends on these details. Section \ref{section:Heisenberg} and \ref{section:local_uniform_contraction} covers the first part. In Section \ref{section:Heisenberg}, we derive an upper bound on $|\Tr[\rho O - \tilde{\rho}\tilde{O}]|$ in terms of a difference between the expectation values of local observables undergoing two different dynamics: a time-dependent discrete-time local dissipative dynamics and its noisy counterpart. In Section \ref{section:local_uniform_contraction}, this difference is bounded in terms of a function that only depends on the circuit that induces the noiseless dynamics. This function, applied to arbitrary circuits, is likely to yield a lousy bound. However, for the circuits discussed in this paper, the bound yields the main result. This analysis, which is the main content of Section \ref{section:examples}, is model-specific, and constitutes the second part of our argument.

\section{Notations and facts\label{section:ref}}
We list a set of conventions and facts; they shall be referenced many times in the proof. In the first reading, the readers can skip the content of Section \ref{section:states}-\ref{section:observables} and intermittently come back when the relevant content is referred from the proof. 

Here is our convention on the Hilbert space. We shall formally treat $S_t$ and $S_t'$ to be different Hilbert spaces for all $t$. The qubits within each $S_t$ and $S_t'$ shall be referred to as $i_t$ and $i_t'$, where $i$ is an integer ranging from $1$ to $\ell_x$. Also,
\begin{equation}
  \begin{aligned}
    S_{[t,t']} &= \bigcup_{n\in [t,t']} S_n \\
    S_{[t,t']}' &= \bigcup_{n\in [t,t']} S_n',
  \end{aligned}  
\end{equation}
and
\begin{equation}
  \begin{aligned}
    \mathcal{H}_{S_{[t,t']}} &= \bigotimes_{n\in [t,t']} \mathcal{H}_{S_n} \\
    \mathcal{H}_{S_{[t,t']}'} &= \bigotimes_{n\in [t,t']} \mathcal{H}_{S_n'},
  \end{aligned}
\end{equation}
where $[t,t'] = \{n \in \mathbb{Z}, t\leq n\leq t'  \}.$

Except for $\rho$ and $\tilde{\rho}$, which are states acting on $\mathcal{H}_{S_{[1,\ell_y]}}$, all the states will be accompanied with their domain in the superscript. Indeed, this is the convention we have used so far in Section \ref{section:summary}. On top of the states that are already defined, we shall use the following set of states:
\begin{equation}
  \begin{aligned}
  \omega^{S_{t_1}S_{t_2}'} &= \omega^{S_{t_1}} \otimes \omega^{S_{t_2}'} \\
  \tilde{\omega}^{S_{t_1}S_{t_2}'} &= \tilde{\omega}^{S_{t_1}} \otimes \tilde{\omega}^{S_{t_2}'}
  \end{aligned}
\end{equation}
Also,
\begin{equation}
  \begin{aligned}
    \omega^{S_{[t_1,t_1']}S_{[t_2,t_2']}'} &= \bigotimes_{n\in [t_1,t_1']} \omega^{S_n}  \bigotimes_{m\in [t_2,t_2']} \omega^{S_m'}\\
    \tilde{\omega}^{S_{[t_1,t_1']}S_{[t_2,t_2']'}} &= \bigotimes_{n\in [t_1,t_1']} \tilde{\omega}^{S_n} \bigotimes_{m\in [t_2,t_2']} \tilde{\omega}^{S_{m}'}.
  \end{aligned}
\end{equation}

For quantum channels, we shall denote the domain and the codomain in their subscript and superscript; an exception to this rule is the unitary operators($\mathcal{U}_t$) and their noisy counterparts($\tilde{\mathcal{U}}_t$). A dual of a channel with respect to the Hilbert-Schmidt inner product is denoted by placing a $*$ in the superscript. The composition of the aforementioned channels($T_B^{BS_t}$ and $\tilde{T}_B^{BS_t}$) shall appear frequently:
\begin{equation}
  \begin{aligned}
    \trant{t}{t'} &=  \tran{t'} \circ \tran{t'-1} \circ \cdots \circ\tran{t+1} \circ \tran{t} \\
    \ttrant{t}{t'} &=  \ttran{t'} \circ \ttran{t'-1} \circ \cdots \circ\ttran{t+1} \circ \ttran{t}.
  \end{aligned}
\end{equation}
 Also,
\begin{equation}
  \begin{aligned}
    \U{t}{t'} &= \mathcal{U}_{t'} \circ \mathcal{U}_{t'-1} \circ \cdots \circ \mathcal{U}_{t+1} \circ \mathcal{U}_{t} \\
     \tU{t}{t'} &= \tilde{\mathcal{U}}_{t'} \circ \tilde{\mathcal{U}}_{t'-1} \circ \cdots \circ \tilde{\mathcal{U}}_{t+1} \circ \tilde{\mathcal{U}}_{t}
  \end{aligned}
\end{equation}
Let us again emphasize that the identity superoperators are suppressed in the above expressions. Unless specified otherwise, this will be the convention we employ.

The support of the operators shall be generally suppressed. When necessary, the nontrivial support of an operator $O$ shall be denoted as $\text{Supp}(O)$; we define $|\text{Supp}(O)|$ as its volume. A set of operators acting on a Hilbert space $\mathcal{H}$ shall be denoted as $\mathcal{B}(\mathcal{H})$. Lastly, we shall suppress the proportionality constant by using the physics convention of $\mathcal{O}$ notation. That is, $\mathcal{O}(f(x))$ is a function that is bounded between $cf(x)$ and $c'f(x)$ for some constants $c$ and $c'$.

\subsection{Locality \label{section:locality}}
As we explained in Section \ref{section:HQS}, the state $\rho$ can be thought as a state created by  gliding a 1D system over a 2D system and letting them interact with each other. In the intermediate stages, the joint state of the 1D and the 2D system can be written as follows:
\begin{equation}
  \begin{aligned}
    \trant{1}{t}&(\rho^B) \\
    \ttrant{1}{t}&(\tilde{\rho}^B),
    \end{aligned}
\end{equation}
where $t\in [1,\ell_y]$.

It is important to note that states at different times($t$) act on different Hilbert spaces. In particular, depending on the time, what it means for an operator to be ``local'' changes. Our analysis crucially relies on understanding the behavior of these operators, and as such, we must define the notion of distance for different times.

Distance between two different qubits at time $t$ is defined in terms of the interaction graph, $G^{(t)}=(V^{(t)}, E^{(t)})$. This graph can be thought as a merger between a graph that represents the locality structure of the simulated system and the locality structure of the physical device. Specifically,
\begin{equation}
  \label{eq:locality}
  \begin{aligned}
    V^{(t)} &= V_{\text{sim}}^{(t)} \cup V_{\text{dev}}^{(t)} \\
    E^{(t)} &= E_{\text{sim}}^{(t)} \cup E_{\text{dev}}^{(t)}, 
  \end{aligned}
\end{equation}
where $G_{\text{sim}}^{(t)} = (V_{\text{sim}}^{(t)}, E_{\text{sim}}^{(t)})$ is the graph for the simulated system and $G_{\text{dev}}^{(t)} =(V_{\text{dev}}^{(t)}, E_{\text{dev}}^{(t)}) $ is the graph for the physical device. The former is defined as
\begin{gather*}
    V_{\text{sim}}^{(t)} = \bigcup_{n\in [1,t]} S_n \\
    E_{\text{sim}}^{(t)} = \{(i_n,j_n)| n\in [1,t], |i-j|=1 \} \cup \{(i_n,i_m)| n,m\in [1,t], |n-m|=1,  \}.
  \end{gather*}
For example, see FIG.\ref{fig:locality_sim}. The latter depends on the details of the arrangement of bath, sink, and system qubits. Our analysis is independent of these details as long as they are arranged in a regular pattern and the ratio between the number of system qubits and the rest is bounded. One such example would be FIG.\ref{fig:architecture}. The merged graph is described in FIG.\ref{fig:locality}(cf. Eq.\ref{eq:locality}).
\begin{figure}[h]
  \begin{tikzpicture}[scale=0.5]
    \foreach \x in {1,...,5}
    \foreach \y in {1,...,5}
             {
         \node[circle,draw=black,fill=black,scale=0.6](\x) at (\x, \y) {};
             }
    \foreach \x in {1,...,5}
        {\draw (\x,1)--(\x,5);}
    \foreach \y in {1,...,5}
             {\draw (1,\y)--(5,\y);}
    \draw[->] (0.9,0.5)--(5.2,0.5) node [midway, below] {$t$};
  \end{tikzpicture}
  \caption{An example of $G_{\text{sim}}^{(t)}$. Here $\ell_x=5$ and $t=5$.  \label{fig:locality_sim}}
\end{figure}
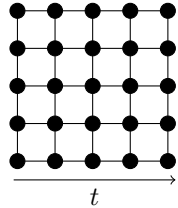
\begin{figure}[h]
  \begin{tikzpicture}[scale=0.5]
        \draw[->] (1.3,0,16)--(7.5,0,16) node [midway, below] {$t$};
    \foreach \i in {5}
        \foreach \j in {1,...,15}
                {\draw (\i,0.5,\j)--(\i,0,\j);}
    \draw (5,0.5,1)--(5,0.5,15);
    \draw (5,0,1)--(5,0,15);
    \foreach \i in {1,...,5}
             {\draw (1,0,3*\i)--(5,0,3*\i);
               \draw[dashed] (5,0,3*\i)--(7,0,3*\i);
             }
    \foreach \i in {1,...,5}
             {\draw (\i,0,3)--(\i,0,15);}
    \foreach \i in {6,7}
             {\draw[dashed] (\i,0,3)--(\i,0,15);}
    \foreach \i in {5}
        \foreach \j in {1,...,15}
                {\node[circle,draw=black,fill=white,scale=0.4](\i) at (\i, 0, \j) {};}
    \foreach \i in {1,...,7}
        \foreach \j in {1,...,5}
                {\node[circle,draw=black,fill=black,scale=0.4](\i) at (\i,0, 3*\j) {};}
    \foreach \j in {1,...,15}
             {\node[circle,draw=gray,fill=gray,scale=0.4](\j) at (5,0.5,\j) {};}
  \end{tikzpicture}
  \caption{The interaction graph $G^{(t)}$ at time $t$ is defined as a merger between $G_{\text{sim}}^{(t)}$(cf. FIG.\ref{fig:locality_sim}) and $G_{\text{dev}}^{(t)}$(cf. FIG.\ref{fig:architecture}). Two qubits are their neighbors if they are connected by an edge. Distance is defined as the graph distance. This figure describes the interaction graph for $t=5$, where $\ell_x=7$ and $\ell_y=5$. The dashed lines are \emph{not} included in the edges of the interaction graph. They are merely a bookkeeping device.\label{fig:locality}}
\end{figure}
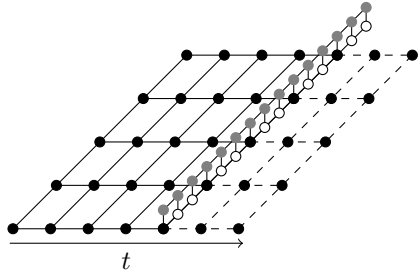

A distance at time $t$ between two qubits is defined as the graph distance between the corresponding vertices in $G^{(t)}$. An operator $O$ at time $t$ is said to be a local if it is supported on a ball of radius $r=\mathcal{O}(1)$ in $G^{(t)}$.

\subsection{States\label{section:states}}
We bound the difference between the local expectation values of noiseless states$(\rho^B$ and  $\omega^{S_{[t_1,t_1']}S_{[t_2,t_2']}'})$ and the noisy states$(\tilde{\rho}^B$ and $\tilde{\omega}^{S_{[t_1,t_1']}S_{[t_2,t_2']}'})$.
\begin{lem}
  For $O\in\mathcal{B}(\mathcal{H}_B)$
  \label{lemma:state_bath}
  \begin{equation}
    |\Tr[(\rho^B - \tilde{\rho}^B)O ]| \leq \epsilon \|O \| |\textnormal{Supp}(O)|
  \end{equation}
\end{lem}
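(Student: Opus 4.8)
The plan is to reduce the statement to a bound on the trace-norm distance between the marginals of $\rho^B$ and $\tilde\rho^B$ on $\Supp{O}$, and then to combine a standard telescoping (hybrid) argument with H\"older's inequality. Since $\rho^B$ and $\tilde\rho^B$ are product states and $O$ acts trivially outside $\Supp{O}$, tracing out $B\setminus\Supp{O}$ leaves precisely the product of the remaining single-site factors, so
\[
\Tr[(\rho^B-\tilde\rho^B)O] = \Tr\big[(\rho^{\Supp{O}}-\tilde\rho^{\Supp{O}})\,O\big],
\]
where $\rho^{\Supp{O}} := \bigotimes_{i\in\Supp{O}}\rho^i$ and likewise for $\tilde\rho^{\Supp{O}}$. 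By H\"older's inequality, $|\Tr[(\rho^{\Supp{O}}-\tilde\rho^{\Supp{O}})O]| \leq \|O\|\,\|\rho^{\Supp{O}}-\tilde\rho^{\Supp{O}}\|_1$, so it remains to bound the trace-norm distance between two product states whose corresponding factors are $\epsilon$-close in trace norm.

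For this I would enumerate $\Supp{O}=\{1,\dots,m\}$ with $m=|\Supp{O}|$ and telescope:
\[
\rho^{\Supp{O}} - \tilde\rho^{\Supp{O}} = \sum_{k=1}^{m} \Big(\bigotimes_{i<k}\tilde\rho^i\Big)\otimes(\rho^k-\tilde\rho^k)\otimes\Big(\bigotimes_{i>k}\rho^i\Big).
\]
The trace norm is multiplicative under tensor products, each $\rho^i$ and $\tilde\rho^i$ is a state (trace norm $1$), and by assumption $\|\rho^k-\tilde\rho^k\|_1\leq\epsilon$; hence the $k$-th summand has trace norm at most $\epsilon$. The triangle inequality then gives $\|\rho^{\Supp{O}}-\tilde\rho^{\Supp{O}}\|_1\leq m\epsilon=\epsilon|\Supp{O}|$, which together with the H\"older bound above yields $|\Tr[(\rho^B-\tilde\rho^B)O]|\leq\epsilon\|O\|\,|\Supp{O}|$, as claimed.

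The argument is elementary and I do not expect a genuine obstacle; the only point warranting a moment of care is the initial reduction, i.e.\ verifying that tracing out the complement of $\Supp{O}$ leaves exactly the product of the remaining one-site marginals with no surviving cross terms, which is immediate from the product structure of $\rho^B$ and $\tilde\rho^B$. It is worth noting that this telescoping-plus-H\"older scheme — with the single-site trace-norm bounds replaced by diamond-norm bounds on gates — is the prototype for the more involved stability estimates used later in the paper.
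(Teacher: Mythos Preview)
Your proof is correct and uses essentially the same telescoping (hybrid) argument as the paper; the only cosmetic difference is that you apply H\"older's inequality once and then telescope in trace norm, whereas the paper telescopes directly at the level of expectation values by peeling off one site at a time and bounding the operator norm of the partially contracted observable. Both routes hinge on the same idea and give the same constant.
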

\begin{proof}
  Without loss of generality, let $\text{Supp}(O)=\{a_1,\cdots, a_n \}$ and $X=\text{Supp}(O)$.
  \begin{equation}
    \begin{aligned}
      |\Tr[(\rho^B - \tilde{\rho}^B)O]| &=|\Tr[(\rho^X -  \tilde{\rho}^X ) O]| \\
      &\leq |\Tr[(\rho^{a_1}-\tilde{\rho}^{a_1})\otimes \rho^{X\setminus \{a_1 \}} O]|  + |\Tr[\tilde{\rho}^{a_1}\otimes(\rho^{X\setminus \{a_1 \}} - \tilde{\rho}^{X\setminus \{a_1 \}})O]| \\
      &\leq |\Tr[(\rho^{a_1} - \tilde{\rho}^{a_1})O']| + |\Tr[(\rho^{X\setminus \{a_1 \}} - \tilde{\rho}^{X\setminus \{a_1 \}})O'']|,
    \end{aligned}
  \end{equation}
  where $O'= \Tr_{X\setminus \{a_1\}}[\rho^{X\setminus\{a_1 \}}O]$ and $O'' = \Tr_{\{a_1\}}[\tilde{\rho}^{a_1} O]$. Note that
  \begin{equation}
    \begin{aligned}
      \|O' \| &=\sup_{\substack{\sigma^{a_1}\geq 0 \\ \Tr(\sigma^{a_1})=1}} |\Tr[\sigma^{a_1} \otimes \rho^{X\setminus a_1}O]|\\
      &\leq \sup_{\substack{\sigma^X\geq 0 \\ \Tr(\sigma)=1}} |\Tr[\sigma^X O]| \\
      &=\|O \|.
    \end{aligned}
  \end{equation}
Similarly, $\|O''\| \leq \| O \|$. Since $|\Tr[(\rho^{a_1}-\tilde{\rho}^{a_1})O']\leq \|\rho^{a_1} - \tilde{\rho}^{a_1}\|_1 \| O'\|$
\begin{equation}
    \sup_{\substack{\|O \|\leq 1\\ \text{Supp}(O) \subset X}}|\Tr[(\rho^X - \tilde{\rho}^X)O]| \leq \|O\|\epsilon + \sup_{\substack{\|O \|\leq 1 \\ \text{Supp}(O)\subset X\setminus \{a_1\} }} |\Tr[(\rho^{X\setminus a_1} - \tilde{\rho}^{X\setminus a_1}) O]|.
\end{equation}
The claim is proved by iterating the bound $|\textnormal{Supp}(O)|$ times.
 \end{proof}

A similar bound can be derived for the states of the system/sink.
\begin{lem}
  \label{lemma:state_systemsink}
  For $O \in \mathcal{B}(\mathcal{H})$, where $\mathcal{H} = \mathcal{H}_{S_{[t_1,t_1']} S_{[t_2,t_2']}'}$, \label{eq:state_systemsink}
  \begin{equation}
    \|\Tr_{S_{[t_1,t_1']}S_{[t_2,t_2']}'}[(\omega^{S_{[t_1,t_1']}S_{[t_2,t_2']}'} - \tilde{\omega}^{S_{[t_1,t_1']}S_{[t_2,t_2']}'} ) O ]\| \leq \epsilon \|O \| |\Supp{O}|.
  \end{equation} 
\end{lem}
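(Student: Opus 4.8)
The plan is to mimic the proof of Lemma~\ref{lemma:state_bath} almost verbatim, the only genuinely new point being that the quantity here is operator-valued (an operator on whatever Hilbert space is implicitly carried along by the suppressed identity superoperators), so the absolute value in the scalar argument must be replaced everywhere by the operator norm. First I would reduce to the case in which $O$ acts nontrivially only on $X = \Supp{O} \subseteq S_{[t_1,t_1']} \cup S_{[t_2,t_2']}'$: since $\omega^{S_{[t_1,t_1']}S_{[t_2,t_2']}'}$ and $\tilde\omega^{S_{[t_1,t_1']}S_{[t_2,t_2']}'}$ are both product states and $O$ is trivial outside $X$, tracing out the factors in $(S_{[t_1,t_1']}\cup S_{[t_2,t_2']}')\setminus X$ simply replaces those factors by the scalar $1$ (the partial trace of a normalized state), leaving $\Tr_X[(\omega^X - \tilde\omega^X)O]$.

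Next, writing $X = \{a_1,\dots,a_n\}$ with $n = |\Supp{O}|$, I would telescope the difference of product states one tensor factor at a time,
\begin{equation}
  \omega^X - \tilde\omega^X = \sum_{k=1}^n \Big(\bigotimes_{j<k}\tilde\omega^{a_j}\Big) \otimes (\omega^{a_k} - \tilde\omega^{a_k}) \otimes \Big(\bigotimes_{j>k}\omega^{a_j}\Big).
\end{equation}
For the $k$-th summand, trace out every site of $X$ except $a_k$: the factors $\tilde\omega^{a_j}$ ($j<k$) and $\omega^{a_j}$ ($j>k$) get absorbed into an effective operator $O^{(k)}$ supported on $a_k$, defined exactly as the $O'$ and $O''$ in Lemma~\ref{lemma:state_bath}, and the same supremum computation there shows $\|O^{(k)}\| \le \|O\|$, i.e.\ contracting an operator against a normalized positive operator on a subsystem cannot increase the operator norm. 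When identity superoperators on spaces other than $S_{[t_1,t_1']}S_{[t_2,t_2']}'$ are present, the bound on the $k$-th term is instead the Hölder-type estimate $\|\Tr_{a_k}[(\omega^{a_k}-\tilde\omega^{a_k})\,O^{(k)}]\| \le \|\omega^{a_k}-\tilde\omega^{a_k}\|_1\,\|O^{(k)}\|$, which is precisely what produces an operator-norm bound on an operator-valued output.

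Finally, each of the $n$ terms is at most $\|\omega^{a_k}-\tilde\omega^{a_k}\|_1\,\|O^{(k)}\| \le \epsilon\,\|O\|$ by the noise assumption $\|\tilde\omega^i - \omega^i\|_1 \le \epsilon$ together with $\|O^{(k)}\| \le \|O\|$; summing over $k$ gives $\epsilon\,\|O\|\,|\Supp{O}|$, as claimed. I do not expect a real obstacle here, since the argument is structurally identical to Lemma~\ref{lemma:state_bath}; the only thing requiring a little care is the operator-valued (rather than scalar) bookkeeping — verifying that each partial-trace contraction against a normalized state is an operator-norm contraction and that the telescoped collapsed operators $O^{(k)}$ remain supported on the shrinking region with operator norm bounded by $\|O\|$.
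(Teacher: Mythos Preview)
Your proposal is correct and follows essentially the same route as the paper: both reduce to the support $X=\Supp{O}$, peel off one tensor factor at a time from the product-state difference, and bound each peeled term by $\epsilon\|O\|$ using that contraction against a normalized state is an operator-norm contraction. The only cosmetic difference is that you write the full telescoping sum at once while the paper phrases it as an iterated recursion, and you invoke the H\"older-type bound directly whereas the paper unpacks the operator norm via a supremum over states $\sigma$ on the complementary space.
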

\begin{proof}
  Without loss of generality, let $\text{Supp}(O)=\{a_1,\cdots, a_n \}$ and $X=\text{Supp}(O)$. Then the left hand side of Eq.\ref{eq:state_systemsink} is equal to $ \|\Tr_{S_{[t_1,t_1']}S_{[t_2,t_2']}'}[(\omega^X-\tilde{\omega}^X  ) O ]\|$. This is bounded as
  \begin{equation}
    \begin{aligned}
      \|\Tr_{S_{[t_1,t_1']}S_{[t_2,t_2']}'}[(\omega^X-\tilde{\omega}^X  ) O ]\| &= \sup_{\substack{\sigma\geq 0 \\ \Tr[\sigma]=1}}|\Tr[\sigma\otimes (\omega^X-\tilde{\omega}^X)O]| \\
      &\leq \delta_1+\delta_2,
    \end{aligned}
  \end{equation}
  \begin{equation}
  \begin{aligned}
    \delta_1&=  \sup_{\substack{\sigma\geq 0 \\ \Tr[\sigma]=1}} |\Tr[\sigma\otimes (\omega^{a_1}-\tilde{\omega}^{a_1})\otimes \omega^{X\setminus \{a_1 \}} O]|  \\
    \delta_2&= \sup_{\substack{\sigma\geq 0 \\ \Tr[\sigma]=1}} |\Tr[\sigma \otimes \tilde{\omega}^{a_1}\otimes(\omega^{X\setminus \{a_1 \}} - \tilde{\omega}^{X\setminus \{a_1 \}})O]|,
  \end{aligned}
  \end{equation}
  where $\sigma\in \mathcal{B}(\mathcal{H}')$. The first term can be bounded as
  \begin{equation}
    \begin{aligned}
      \delta_1 &=  \sup_{\substack{\sigma\geq 0 \\ \Tr[\sigma]=1}}|\Tr[\sigma\otimes (\omega^{a_1} - \tilde{\omega}^{a_1})O']| \\
      &\leq \epsilon\|O\|,
    \end{aligned}
  \end{equation}
  where  $O'= \Tr_{X\setminus \{a_1\}}[\omega^{X\setminus\{a_1 \}}O]$. The second term can be bounded as
  \begin{equation}
    \begin{aligned}
      \delta_2 &=  \sup_{\substack{\sigma\geq 0 \\ \Tr[\sigma]=1}}|\Tr[\sigma\otimes(\omega^{X\setminus \{a_1 \}} - \tilde{\omega}^{X\setminus \{a_1 \}})O'']| \\
        &\leq \|\Tr_{S_{[t_1,t_1']}S_{[t_2,t_2']}' }[(\omega^{X\setminus a_1} - \tilde{\omega}^{X\setminus a_1}) O]\|
    \end{aligned}
  \end{equation}
where $O'' = \Tr_{\{a_1\}}[\tilde{\omega}^{a_1} O]$. In short,
\begin{equation}
    \sup_{\substack{\|O \|\leq 1\\ \text{Supp}(O) \subset X}}\|\Tr_{S_{[t_1,t_1']}S_{[t_2,t_2']}'} [(\omega^X - \tilde{\omega}^X)O]\| \leq \epsilon + \sup_{\substack{\|O \|\leq 1 \\ \text{Supp}(O)\subset X\setminus \{a_1\} }} \|\Tr_{S_{[t_1,t_1']}S_{[t_2,t_2']}'}[(\omega^{X\setminus a_1} - \tilde{\omega}^{X\setminus a_1}) O]\|.
  \end{equation}
Iterating the bound $|\textnormal{Supp}(O)|$ times, the claim is proved.
\end{proof}

\subsection{Channels\label{section:processes}}
We study the properties of the channels that define the state $\rho$ and $\tilde{\rho}$, i.e., $T_B^{BS_t}$ and $\tilde{T}_B^{BS_t}$. Let us begin with their algebraic properties.
\begin{lem}
  \label{lemma:dual}
  \begin{equation}
    \begin{aligned}
      \trantd{t}{t'}(\cdot) &= \Tr_{S_{[t,t']} S_{[t,t']}'}[\Ohm{t}{t'} \U{t}{t'}^* (I_{S_{[t,t']'}} \otimes (\cdot))]\\  
      \ttrantd{t}{t'}(\cdot) &= \Tr_{S_{[t,t']} S_{[t,t']}'}[\tOhm{t}{t'} \tU{t}{t'}^* (I_{S_{[t,t']'}} \otimes (\cdot))]. 
    \end{aligned}
  \end{equation} 
\end{lem}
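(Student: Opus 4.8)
The plan is to reduce the composite transition map to a single ``collect-and-trace'' channel and then take its Hilbert--Schmidt adjoint by elementary manipulations. No analytic estimates are involved --- the statement is an algebraic identity --- and the same argument applies verbatim to the noisy maps.

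First I would reorganize $\trant{t}{t'}$. Unwinding the definitions, $\trant{t}{t'}(\rho^B)$ is the alternating composition in which, at each step $s\in[t,t']$, one adjoins $\omega^{S_s'}\otimes\omega^{S_s}$, applies $\mathcal{U}_s$ (which acts only on $BS_s'S_s$), and traces out the sink register $S_s'$. The crucial observation is that once $\mathcal{U}_s$ has acted, none of the later gates $\mathcal{U}_{s+1},\dots,\mathcal{U}_{t'}$ touch $S_s'$, so the partial trace $\Tr_{S_s'}$ commutes past everything that follows it; likewise each $\mathcal{U}_s$ acts as the identity on the registers $S_{s+1},\dots,S_{t'}$ and $S_{s+1}',\dots,S_{t'}'$ that have not yet been introduced. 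Moving all the sink traces to the outside therefore gives
\begin{equation}
  \trant{t}{t'}(\rho^B) = \Tr_{S_{[t,t']}'}\!\left[\U{t}{t'}\!\left(\rho^B \otimes \Ohm{t}{t'}\right)\right],
\end{equation}
and the identical statement with tildes. (Equivalently, this may be obtained by induction on $t'-t$ using $\trant{t}{t'} = \tran{t'}\circ\trant{t}{t'-1}$ and $\U{t}{t'}=\mathcal{U}_{t'}\circ\U{t}{t'-1}$.)

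Next I would dualize. For $X\in\mathcal{B}(\mathcal{H}_{BS_{[t,t']}})$ and any $\rho^B\in\mathcal{B}(\mathcal{H}_B)$, start from $\Tr\!\left[X\,\trant{t}{t'}(\rho^B)\right]$, insert the identity on $S_{[t,t']}'$ to turn the partial trace into a full trace, move $\U{t}{t'}$ across the Hilbert--Schmidt pairing to its adjoint $\U{t}{t'}^*$, and obtain $\Tr\!\left[\U{t}{t'}^*(X\otimes I_{S_{[t,t']}'})\,(\rho^B\otimes\Ohm{t}{t'})\right]$. Because $\Ohm{t}{t'}$ acts trivially on $B$, the partial trace over the system and sink registers is cyclic in it, and $\rho^B\otimes I$ factors out of that partial trace; collecting these steps yields $\Tr_B\!\left[\rho^B\cdot \Tr_{S_{[t,t']}S_{[t,t']}'}\!\left[\Ohm{t}{t'}\,\U{t}{t'}^*(I_{S_{[t,t']}'}\otimes X)\right]\right]$. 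Since this holds for every $\rho^B$, the bracketed operator on $B$ must equal $\trantd{t}{t'}(X)$, which is the claimed formula; replacing $\mathcal{U},\omega$ by $\tilde{\mathcal{U}},\tilde{\omega}$ throughout gives the second line.

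The content of the lemma is thus purely bookkeeping, and that is also where the only real care is needed: one must keep track of which tensor factors every $\mathcal{U}_s$, every $\omega$, and every suppressed identity acts on, and check at each step that the operators being cycled through a partial trace are genuinely of the form ``identity on $B$ tensored with an operator on the system/sink'', so that the partial-trace cyclicity used in the dualization and the trace-commutation used in the reorganization are legitimate. Writing out the base case $t=t'$ and a single inductive step makes all of these checks routine.
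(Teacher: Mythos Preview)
Your proposal is correct and is essentially the same bookkeeping argument as the paper's: both rely on the elementary duality relations for partial trace, unitary conjugation, and tensoring with a fixed state (exactly the content of the paper's Table~\ref{table:channel_dual}). The only organizational difference is that the paper dualizes each single-step $\tran{s}$ and then composes the duals via $(\tran{t'}\circ\cdots\circ\tran{t})^* = \trand{t}\circ\cdots\circ\trand{t'}$, whereas you first collapse the forward composition into the single expression $\Tr_{S_{[t,t']}'}[\U{t}{t'}(\rho^B\otimes\Ohm{t}{t'})]$ and then dualize once; this extra preliminary step is harmless and arguably makes the commutation of the sink traces more explicit, but it does not change the substance of the argument.
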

\begin{proof}
  Recall the definition of $\tran{t}$ and $\ttran{t}$:
  \begin{equation}
    \begin{aligned}
      \tran{t}(\cdot) &= \Tr_{S_t'}[\mathcal{U}_t((\cdot)\otimes \omega^{S_t'} \otimes \omega^{S_t})] \\
      \ttran{t}(\cdot) &= \Tr_{S_t'}[\tilde{\mathcal{U}}_t((\cdot)\otimes \tilde{\omega}^{S_t'} \otimes \omega^{S_t})].
    \end{aligned}
  \end{equation}
    The dual of the involved channels are summarized in Table \ref{table:channel_dual}; a similar list of statements holds for the noisy channels. By composing the duals, the claim is proved. 
    \begin{table}[ht]
      \centering
      \begin{tabular}{|c| c|}
        \hline
        Channel & Dual \\
        \hline
        $\rho \to \Tr_{S_t'}[\rho]$ & $O \to  I_{S_t'}\otimes I$ \\
        \hline
        $\rho \to \mathcal{U}_t(\rho)$ & $O\to \mathcal{U}_t^*(O)$ \\
        \hline
        $\rho \to \rho \otimes \omega^{S_t}$ & $O \to \Tr_{S_t}[\omega^{S_t}O]$ \\
        \hline
        $\rho \to \rho \otimes \omega^{S_t'}$ & $O \to \Tr_{S_t'}[\omega^{S_t'}O]$ \\
        \hline
      \end{tabular}
      \caption{Channels and their duals.\label{table:channel_dual}}
    \end{table}
\end{proof}

The physical interpretation of the dual channel is the ``time-evolution'' in the Heisenberg picture. For a state $\rho^{BS_{[1,t]}}$ and an operator $O\in \mathcal{B}(\mathcal{H}_{BS_{[1,t+1]}})$,
\begin{equation}
\Tr[T_B^{BS_{t+1}}(\rho^{BS_{[1,t]}}) O] = \Tr[\rho^{BS_{[1,t]}} T_B^{BS_{t+1}*}(O)].
\end{equation}
Of course, $T_B^{BS_{t+1}*}$ maps an operator in $\mathcal{B}(\mathcal{H}_{BS_{[1,t+1]}})$ to operators in $\mathcal{B}(\mathcal{H}_{BS_{[1,t]}})$; see FIG.\ref{fig:Schrodinger_Heisenberg}. It is important to note that the time in the Heisenberg picture is backward. While the state evolves from $t=0$ to $t=\ell_y$, the operator evolves from $t=\ell_y$ to $t=0$.
\begin{figure}[h]
  \subfigure[Schr\"odinger picture]{
    \begin{tikzpicture}[scale=0.5]
      \node[below](t) at (4.4,0,16) {$t=5$};
    \foreach \i in {5}
        \foreach \j in {1,...,15}
                {\draw (\i,0.5,\j)--(\i,0,\j);}
    \draw (5,0.5,1)--(5,0.5,15);
    \draw (5,0,1)--(5,0,15);
    \foreach \i in {1,...,5}
             {\draw (1,0,3*\i)--(5,0,3*\i);
               \draw[dashed] (5,0,3*\i)--(7,0,3*\i);
             }
    \foreach \i in {1,...,5}
             {\draw (\i,0,3)--(\i,0,15);}
    \foreach \i in {6,7}
             {\draw[dashed] (\i,0,3)--(\i,0,15);}
    \foreach \i in {5}
        \foreach \j in {1,...,15}
                {\node[circle,draw=black,fill=white,scale=0.4](\i) at (\i, 0, \j) {};}
    \foreach \i in {1,...,7}
        \foreach \j in {1,...,5}
                {\node[circle,draw=black,fill=black,scale=0.4](\i) at (\i,0, 3*\j) {};}
    \foreach \j in {1,...,15}
             {\node[circle,draw=gray,fill=gray,scale=0.4](\j) at (5,0.5,\j) {};}
    \draw[->] (7.5,0,7.5)--(8.5,0,7.5);

             \begin{scope}[xshift=8cm]
               
      \node[below](t) at (4.4,0,16) {$t=6$};
    \foreach \i in {6}
        \foreach \j in {1,...,15}
                {\draw (\i,0.5,\j)--(\i,0,\j);}
    \draw (6,0.5,1)--(6,0.5,15);
    \draw (6,0,1)--(6,0,15);
    \foreach \i in {1,...,5}
             {\draw (1,0,3*\i)--(6,0,3*\i);
               \draw[dashed] (6,0,3*\i)--(7,0,3*\i);
             }
    \foreach \i in {1,...,6}
             {\draw (\i,0,3)--(\i,0,15);}
    \foreach \i in {7}
             {\draw[dashed] (\i,0,3)--(\i,0,15);}
    \foreach \i in {6}
        \foreach \j in {1,...,15}
                {\node[circle,draw=black,fill=white,scale=0.4](\i) at (\i, 0, \j) {};}
    \foreach \i in {1,...,7}
        \foreach \j in {1,...,5}
                {\node[circle,draw=black,fill=black,scale=0.4](\i) at (\i,0, 3*\j) {};}
    \foreach \j in {1,...,15}
             {\node[circle,draw=gray,fill=gray,scale=0.4](\j) at (6,0.5,\j) {};}
             \end{scope}
  \end{tikzpicture}}
  
  \subfigure[Heisenberg picture]{
    \begin{tikzpicture}[scale=0.5]
      
      \node[below](t) at (4.4,0,16) {$t=6$};
    \foreach \i in {6}
        \foreach \j in {1,...,15}
                {\draw (\i,0.5,\j)--(\i,0,\j);}
    \draw (6,0.5,1)--(6,0.5,15);
    \draw (6,0,1)--(6,0,15);
    \foreach \i in {1,...,5}
             {\draw (1,0,3*\i)--(6,0,3*\i);
               \draw[dashed] (6,0,3*\i)--(7,0,3*\i);
             }
    \foreach \i in {1,...,6}
             {\draw (\i,0,3)--(\i,0,15);}
    \foreach \i in {7}
             {\draw[dashed] (\i,0,3)--(\i,0,15);}
    \foreach \i in {6}
        \foreach \j in {1,...,15}
                {\node[circle,draw=black,fill=white,scale=0.4](\i) at (\i, 0, \j) {};}
    \foreach \i in {1,...,7}
        \foreach \j in {1,...,5}
                {\node[circle,draw=black,fill=black,scale=0.4](\i) at (\i,0, 3*\j) {};}
    \foreach \j in {1,...,15}
             {\node[circle,draw=gray,fill=gray,scale=0.4](\j) at (6,0.5,\j) {};}
                 \draw[->] (7.5,0,7.5)--(8.5,0,7.5);

             \begin{scope}[xshift=8cm]
               \node[below](t) at (4.4,0,16) {$t=5$};
    \foreach \i in {5}
        \foreach \j in {1,...,15}
                {\draw (\i,0.5,\j)--(\i,0,\j);}
    \draw (5,0.5,1)--(5,0.5,15);
    \draw (5,0,1)--(5,0,15);
    \foreach \i in {1,...,5}
             {\draw (1,0,3*\i)--(5,0,3*\i);
               \draw[dashed] (5,0,3*\i)--(7,0,3*\i);
             }
    \foreach \i in {1,...,5}
             {\draw (\i,0,3)--(\i,0,15);}
    \foreach \i in {6,7}
             {\draw[dashed] (\i,0,3)--(\i,0,15);}
    \foreach \i in {5}
        \foreach \j in {1,...,15}
                {\node[circle,draw=black,fill=white,scale=0.4](\i) at (\i, 0, \j) {};}
    \foreach \i in {1,...,7}
        \foreach \j in {1,...,5}
                {\node[circle,draw=black,fill=black,scale=0.4](\i) at (\i,0, 3*\j) {};}
    \foreach \j in {1,...,15}
             {\node[circle,draw=gray,fill=gray,scale=0.4](\j) at (5,0.5,\j) {};}
             \end{scope}
             
    \end{tikzpicture}
    }
  \caption{The time evolution in the Schr\"odinger(a) and the Heisenberg(b) picture. \label{fig:Schrodinger_Heisenberg}}
\end{figure}
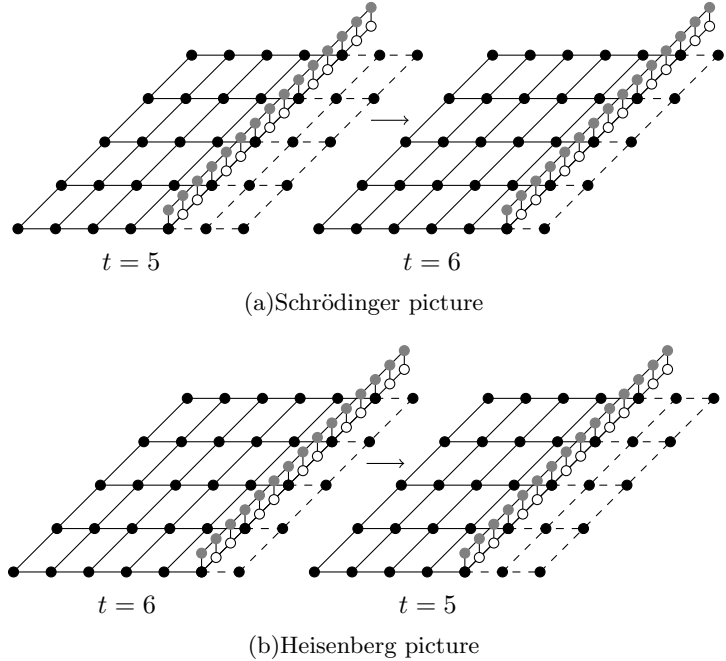

These channels preserve locality. That is, consider an operator $O$ at time $t$. If it is a local operator, then both $T_B^{BS_t*}(O)$ and $\tilde{T}_B^{BS_t*}(O)$ are local. This follows from the fact that the circuit elements that influence the operator $O$ only lies in its vicinity; see FIG.\ref{fig:locality_preserving} and Lemma \ref{lemma:locality_preserving}.
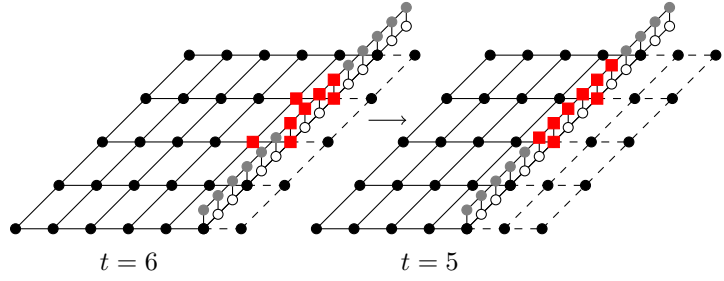
\begin{figure}[h]
\begin{tikzpicture}[scale=0.5]
  \node[below](t) at (4.4,0,16) {$t=6$};
    \foreach \i in {6}
        \foreach \j in {1,...,15}
                {\draw (\i,0.5,\j)--(\i,0,\j);}
    \draw (6,0.5,1)--(6,0.5,15);
    \draw (6,0,1)--(6,0,15);
    \foreach \i in {1,...,5}
             {\draw (1,0,3*\i)--(6,0,3*\i);
               \draw[dashed] (6,0,3*\i)--(7,0,3*\i);
             }
    \foreach \i in {1,...,6}
             {\draw (\i,0,3)--(\i,0,15);}
    \foreach \i in {7}
             {\draw[dashed] (\i,0,3)--(\i,0,15);}
    \foreach \i in {6}
        \foreach \j in {1,...,15}
                {\node[circle,draw=black,fill=white,scale=0.4](\i) at (\i, 0, \j) {};}
    \foreach \i in {1,...,7}
        \foreach \j in {1,...,5}
                {\node[circle,draw=black,fill=black,scale=0.4](\i) at (\i,0, 3*\j) {};}
    \foreach \j in {1,...,15}
             {\node[circle,draw=gray,fill=gray,scale=0.4](\j) at (6,0.5,\j) {};}
             \draw[->] (7.5,0,7.5)--(8.5,0,7.5);

             \node[minimum size=0.3cm,draw=red,fill=red,scale=0.5](1) at (5,0,6) {};
             \node[minimum size=0.3cm,draw=red,fill=red,scale=0.5](1) at (6,0,6) {};
             \node[minimum size=0.3cm,draw=red,fill=red,scale=0.5](1) at (5,0,9) {};
             \node[minimum size=0.3cm,draw=red,fill=red,scale=0.5](1) at (6,0,9) {};
             \node[minimum size=0.3cm,draw=red,fill=red,scale=0.5](1) at (6,0.5,6) {};
             \node[minimum size=0.3cm,draw=red,fill=red,scale=0.5](1) at (6,0.5,7) {};
             \node[minimum size=0.3cm,draw=red,fill=red,scale=0.5](1) at (6,0.5,8) {};
             \node[minimum size=0.3cm,draw=red,fill=red,scale=0.5](1) at (6,0.5,9) {};

             \begin{scope}[xshift=8cm]
               \node[below](t) at (4.4,0,16) {$t=5$};
    \foreach \i in {5}
        \foreach \j in {1,...,15}
                {\draw (\i,0.5,\j)--(\i,0,\j);}
    \draw (5,0.5,1)--(5,0.5,15);
    \draw (5,0,1)--(5,0,15);OB
    \foreach \i in {1,...,5}
             {\draw (1,0,3*\i)--(5,0,3*\i);
               \draw[dashed] (5,0,3*\i)--(7,0,3*\i);
             }
    \foreach \i in {1,...,5}
             {\draw (\i,0,3)--(\i,0,15);}
    \foreach \i in {6,7}
             {\draw[dashed] (\i,0,3)--(\i,0,15);}
    \foreach \i in {5}
        \foreach \j in {1,...,15}
                {\node[circle,draw=black,fill=white,scale=0.4](\i) at (\i, 0, \j) {};}
    \foreach \i in {1,...,7}
        \foreach \j in {1,...,5}
                {\node[circle,draw=black,fill=black,scale=0.4](\i) at (\i,0, 3*\j) {};}
    \foreach \j in {1,...,15}
             {\node[circle,draw=gray,fill=gray,scale=0.4](\j) at (5,0.5,\j) {};}
             
             \node[minimum size=0.3cm,draw=red,fill=red,scale=0.5](1) at (5,0,6) {};
             \node[minimum size=0.3cm,draw=red,fill=red,scale=0.5](1) at (5,0,9) {};
             \node[minimum size=0.3cm,draw=red,fill=red,scale=0.5](1) at (5,0.5,6) {};
             \node[minimum size=0.3cm,draw=red,fill=red,scale=0.5](1) at (5,0.5,7) {};
             \node[minimum size=0.3cm,draw=red,fill=red,scale=0.5](1) at (5,0.5,8) {};
             \node[minimum size=0.3cm,draw=red,fill=red,scale=0.5](1) at (5,0.5,9) {};
             \node[minimum size=0.3cm,draw=red,fill=red,scale=0.5](1) at (5,0.5,5) {};
             \node[minimum size=0.3cm,draw=red,fill=red,scale=0.5](1) at (5,0.5,10) {};

             \end{scope}
\end{tikzpicture}
\caption{In the Heisenberg picture, a local operator is mapped into another local operator. This figure describes an example of $D=1$. The red squares represent the support of the operator. \label{fig:locality_preserving}}
\end{figure}

\begin{lem}\label{lemma:locality_preserving}
  Let $O$ be an operator supported on a ball of radius $r$. Then $T_B^{BS_t*}(O)$ and $\tilde{T}_B^{BS_t*}(O)$ are supported on a ball of radius $r+D$.
\end{lem}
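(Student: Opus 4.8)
\section*{Proof strategy for Lemma~\ref{lemma:locality_preserving}}

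The plan is to feed the $t'=t$ case of Lemma~\ref{lemma:dual} into a light-cone estimate for finite-depth circuits. Specializing Lemma~\ref{lemma:dual} gives
\begin{equation}
  T_B^{BS_t*}(O) = \Tr_{S_t S_t'}\!\left[\omega^{S_tS_t'}\,\mathcal{U}_t^*\!\left(I_{S_t'}\otimes O\right)\right],
\end{equation}
and the analogous identity with $\mathcal{U}_t,\omega^{S_tS_t'}$ replaced by $\tilde{\mathcal{U}}_t,\tilde{\omega}^{S_tS_t'}$ for the noisy map. Thus it suffices to track the nontrivial support of an operator through three operations: (i) tensoring with $I_{S_t'}$, (ii) conjugation by the depth-$D$ nearest-neighbor circuit $\mathcal{U}_t$ (or $\tilde{\mathcal{U}}_t$), and (iii) the partial trace over $S_t,S_t'$ against the corresponding states. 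I will argue that only step (ii) can enlarge the support, and that it does so by at most $D$ in radius.

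For step (i), $I_{S_t'}\otimes O$ has exactly the same nontrivial support as $O$, hence still a ball of radius $r$ in the interaction graph of Section~\ref{section:locality}. For step (iii), tracing out $S_t$ and $S_t'$ against any fixed operator can only delete qubits from the support (the resulting operator is a linear combination of the "coefficient operators'' obtained by expanding in a product basis on $S_tS_t'$), so the radius does not grow. Step (ii) is the heart of the matter, and I would establish the following light-cone claim by induction on the number of layers: if $A$ is supported on a ball of radius $s$, then $\mathcal{U}_t^{(i)*}(A)$ is supported on a ball of radius $s+1$ with the same center. Indeed, $\mathcal{U}_t^{(i)}=\bigotimes_j \mathcal{U}_t^{(i,j)}$ is a tensor product of commuting nearest-neighbor two-qubit gates; a gate disjoint from $\mathrm{Supp}(A)$ acts trivially, while a gate overlapping $\mathrm{Supp}(A)$ adjoins at most its second qubit, which lies within graph distance $1$ of the ball and hence in the ball of radius $s+1$. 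Composing $D$ layers (Eq.~\ref{eq:def2}--\ref{eq:def3}) shows $\mathcal{U}_t^*$ maps a ball of radius $r$ to a ball of radius $r+D$.

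Chaining the three steps, $T_B^{BS_t*}(O)$ is supported on a ball of radius $r+D$. Since $\tilde{\mathcal{U}}_t$ is again a depth-$D$ circuit of nearest-neighbor two-qubit gates and $\tilde{\omega}^{S_t},\tilde{\omega}^{S_t'}$ are again states, the identical argument gives the same conclusion for $\tilde{T}_B^{BS_t*}(O)$; the noise bounds $\|\tilde{\mathcal{U}}_t^{(i,j)}-\mathcal{U}_t^{(i,j)}\|_\diamond\le\epsilon$ play no role here, only the geometric locality does. The only point that needs genuine care is the inductive step of the light-cone claim — making sure that "ball of radius $s$'' propagates to "ball of radius $s+1$'' about a common center, which relies on the fact that the gates within one layer act on disjoint pairs of vertices at graph distance one in the regular interaction graph $G^{(t)}$; one should also check that the supporting ball, after the partial trace over $S_t$, is still a ball in the interaction graph appropriate to the earlier time slice. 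Everything else is bookkeeping with supports.
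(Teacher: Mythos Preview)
Your proposal is correct and follows essentially the same approach as the paper: invoke Lemma~\ref{lemma:dual} to express the dual map via the depth-$D$ circuit, argue inductively that each nearest-neighbor layer $\mathcal{U}_t^{(i)*}$ enlarges the support by at most graph distance one, and note that the partial trace against $\omega^{S_tS_t'}$ cannot increase support. Your write-up is in fact slightly more careful than the paper's (you flag the need to match balls across the time-dependent interaction graphs $G^{(t)}$), but the logical skeleton is identical.
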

\begin{proof}
  By Lemma \ref{lemma:dual},
  \begin{equation}
    \begin{aligned}
      T_B^{BS_t*}(O) = \Tr_{S_tS_t'}[\omega^{S_tS_t'} \mathcal{U}_t^{(1)*}\circ \cdots \circ \mathcal{U}_t^{(D)*}(I_{S_t'} \otimes O)].
    \end{aligned}
  \end{equation}
  The support of $ \mathcal{U}_t^{(1)*}\circ \cdots \circ \mathcal{U}_t^{(D)*}(I_{S_t'} \otimes O)$ is contained in a set of qubits that are distance $D$ or less away from the support of $O$; this is because applying a channel $\mathcal{U}_t^{(i)*}$, $i\in [1,D]$, to an operator $O$ expands the support at most to a set of qubits that are distance $1$ or less away from the support of $O$. Since $O$ was assumed to be supported in a ball of radius $r$, $ \mathcal{U}_t^{(1)*}\circ \cdots \circ \mathcal{U}_t^{(D)*}(I_{S_t'} \otimes O)$ is supported in a ball of radius $r+D$. Since the support is nonincreasing under the partial trace operation, the claim is proved for $T_B^{BS_t*}(O)$. The same logic applies to $\tilde{T}_B^{BS_t*}(O)$.
\end{proof}

Furthermore, the action of $T_B^{BS_t*}$ and $\tilde{T}_B^{BS_t*}$ on local operators deviate at most by $\mathcal{O}(\epsilon)$.
\begin{lem}
\label{lemma:local_approximate_equivalence}
  Let $O$  be an operator supported on a ball of radius $r$. Then
  \begin{equation}
    \|T_B^{BS_t*}(O) - \tilde{T}_B^{BS_t*}(O) \| \leq \mathcal{O}(\epsilon \|O \|(r+D)^2).
  \end{equation}
\end{lem}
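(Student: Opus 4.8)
The plan is to start from the closed form for the dual transition map given by Lemma~\ref{lemma:dual}, namely $T_B^{BS_t*}(O)=\Tr_{S_tS_t'}[\omega^{S_tS_t'}\,C(O)]$ with $C(O):=\mathcal{U}_t^{(1)*}\circ\cdots\circ\mathcal{U}_t^{(D)*}(I_{S_t'}\otimes O)$, and likewise $\tilde T_B^{BS_t*}(O)=\Tr_{S_tS_t'}[\tilde\omega^{S_tS_t'}\,\tilde C(O)]$ with $\tilde C(O)$ the analogue built from the noisy layers $\tilde{\mathcal{U}}_t^{(i)*}$. Inserting the hybrid term $\Tr_{S_tS_t'}[\tilde\omega^{S_tS_t'}C(O)]$ and applying the triangle inequality, the difference splits into a state-preparation error $\Tr_{S_tS_t'}[(\omega^{S_tS_t'}-\tilde\omega^{S_tS_t'})\,C(O)]$ and a gate error $\Tr_{S_tS_t'}[\tilde\omega^{S_tS_t'}(C(O)-\tilde C(O))]$, which I would bound separately.

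For the state-preparation term I would apply Lemma~\ref{lemma:state_systemsink} with $t_1=t_1'=t_2=t_2'=t$. Since each $\mathcal{U}_t^{(i)*}$ is conjugation by a unitary, $\|C(O)\|\le\|O\|$; and, applying Lemma~\ref{lemma:locality_preserving} one layer at a time, $C(O)$ is supported on a ball of radius $r+D$ in $G^{(t)}$. Because the device qubits are arranged in a regular, quasi-one-dimensional pattern, such a ball contains only $\mathcal{O}(r+D)$ qubits of $S_tS_t'$, so Lemma~\ref{lemma:state_systemsink} bounds this term by $\mathcal{O}(\epsilon\|O\|(r+D))$.

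For the gate term, the outer map $\sigma\mapsto\Tr_{S_tS_t'}[\tilde\omega^{S_tS_t'}\sigma]$ is unital completely positive, hence contracts the operator norm (a general $O$ is handled by splitting into its Hermitian and anti-Hermitian parts, costing only a constant), so it suffices to bound $\|C(O)-\tilde C(O)\|$. I would telescope over the $D$ layers, writing $C(O)-\tilde C(O)=\sum_{k=1}^{D}\tilde{\mathcal{U}}_t^{(1)*}\circ\cdots\circ\tilde{\mathcal{U}}_t^{(k-1)*}\big((\mathcal{U}_t^{(k)*}-\tilde{\mathcal{U}}_t^{(k)*})\circ\mathcal{U}_t^{(k+1)*}\circ\cdots\circ\mathcal{U}_t^{(D)*}(I_{S_t'}\otimes O)\big)$. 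In the $k$-th term, $\mathcal{U}_t^{(k+1)*}\circ\cdots\circ\mathcal{U}_t^{(D)*}(I_{S_t'}\otimes O)$ is supported on a ball of radius $r+D-k$, so the gates of layer $k$ disjoint from that ball cancel against their conjugates in the Heisenberg picture and only the $\mathcal{O}(r+D-k)$ gates meeting it survive; telescoping once more over those gates and using $\|\mathcal{U}_t^{(k,j)*}-\tilde{\mathcal{U}}_t^{(k,j)*}\|_{\diamond}=\|\mathcal{U}_t^{(k,j)}-\tilde{\mathcal{U}}_t^{(k,j)}\|_{\diamond}\le\epsilon$ (the completely-bounded norm being unchanged under tensoring with the identity and under passing to the dual), each surviving gate contributes at most $\epsilon\|O\|$, while the remaining unitary layers preserve the operator norm and the noisy layers applied on the left contract it. Summing, $\|C(O)-\tilde C(O)\|\le\sum_{k=1}^{D}\mathcal{O}\big((r+D-k)\epsilon\|O\|\big)=\mathcal{O}(\epsilon\|O\|(r+D)^2)$, which dominates the state term and gives the claim.

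I expect the difficulty here to be organizational rather than conceptual: one must verify carefully that the layer-$k$ gates not meeting the current light cone really do cancel against their conjugates (this is where the locality of $G^{(t)}$ enters, together with the fact that gates within one layer have disjoint support), and one must count light-cone gates correctly — the quadratic bound $(r+D)^2$, as opposed to something larger, relies precisely on the hypothesis that the bath/sink/system qubits sit in a regular pattern with a bounded ratio of system to non-system qubits, so that each circuit layer places only $\mathcal{O}(R)$ gates inside a ball of radius $R$.
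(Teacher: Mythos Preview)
Your proposal is correct and follows essentially the same route as the paper: split into a state-preparation error and a gate error via a hybrid term, control the first with Lemma~\ref{lemma:state_systemsink} using that $C(O)$ is supported on a ball of radius $r+D$, and control the second by telescoping over the circuit elements inside the light cone, counting $\mathcal{O}(rD+D^2)$ such gates. The only cosmetic differences are that the paper first packages the relevant gates into a single localized map $\mathcal{U}_{t,\text{loc}}^*$ and then telescopes gate-by-gate (rather than layer-by-layer as you do), and that the paper bounds the state term more loosely by $\mathcal{O}(\epsilon\|O\|(r+D)^2)$ where you obtain $\mathcal{O}(\epsilon\|O\|(r+D))$; either way the gate term dominates and the final bound is the same.
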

\begin{proof}
  By Lemma \ref{lemma:locality_preserving}, both $T_B^{BS_t*}(O)$ and $\tilde{T}_B^{BS_t*}(O)$ are supported on a ball of radius $r+D$. In particular, these operators can be expressed as a composition of channels localized in the vicinity of $O$. To see why, recall that
  \begin{equation}
    \begin{aligned}
      T_B^{BS_t*}(O) &= \Tr_{S_tS_t'}[\omega^{S_tS_t'} \mathcal{U}_t^{(1)*}\circ \cdots \circ \mathcal{U}_t^{(D)*}(I_{S_t'} \otimes O)].
    \end{aligned}
  \end{equation}
  Because each of the $\mathcal{U}_t^{(i)*}$ for $i\in [1,D]$ consists of nearest-neighbor gates, their action on $I_{S_t'} \otimes O$ can be replaced only in terms of the action of the channels $\mathcal{U}_t^{(i,j)}$ that influence $O$. Let us denote this channel, which is localized in the vicinity of $O$, as
  \begin{equation}
    \mathcal{U}_{t,\text{loc}}^* = \underbrace{\mathcal{U}_{t}^{(1,i_1)*}\circ \cdots \mathcal{U}_t^{(1,i_{n_1})*} \circ \cdots \circ \mathcal{U}_t^{(D,i_1)*} \circ \cdots \circ \mathcal{U}_t^{(D,i_{n_D})*}}_{n},
  \end{equation}
  where $n\leq \mathcal{O}(rD+D^2)$ is the total number of circuit elements that can influence $O$.

  The bound on $n$ is derived as follows. At each $i\in [1,D]$, the support of the operator lies on a ball of radius $r+i$. Since the circuit is only applied to $BS_tS_t'$, these circuit elements act nontrivially only on a region of size bounded by $2(r+i)$. The number of nearest-neighbor two-qubit gates that can act nontrivially on this region is $\mathcal{O}(r+i)$. Summing over $i$ from $1$ to $D$, $n\leq\mathcal{O}(rD+D^2)$.

Now, let us define 
  \begin{equation}
    \tilde{\mathcal{U}}_{t,\text{loc}}^* = \tilde{\mathcal{U}}_{t}^{(1,i_1)*}\circ \cdots \tilde{\mathcal{U}}_t^{(1,i_{n_1})*} \circ \cdots \circ \tilde{\mathcal{U}}_t^{(D,i_1)*} \circ \cdots \circ \tilde{\mathcal{U}}_t^{(D,i_{n_D})*}.
  \end{equation}
  Then
  \begin{equation}
    \begin{aligned}
      \|T_B^{BS_t*}(O) - \tilde{T}_{B}^{BS_t*}(O) \| &= |\Tr_{S_tS_t'}[\omega^{S_tS_t'} \mathcal{U}_{t,\text{loc}}(I_{S_t'}^* \otimes O) - \tilde{\omega}^{S_t}\otimes \tilde{\omega}^{S_t'} \tilde{\mathcal{U}}_{t,\text{loc}}^*(I_{S_t'} \otimes O)]| \\
      &\leq \delta_1 + \delta_2,
    \end{aligned}   
  \end{equation}
  where
  \begin{equation}
    \begin{aligned}
    &\delta_1= |\Tr_{S_tS_t'}[(\omega^{S_tS_t'} - \tilde{\omega}^{S_tS_t'}) \mathcal{U}_{t,\text{loc}}^*(I_{S_t'} \otimes O)]|  \\
    &\delta_2=|\Tr_{S_tS_t'}[\omega^{S_tS_t'}(\mathcal{U}_{t,\text{loc}}^*(I_{S_t'}\otimes O)  - \tilde{\mathcal{U}}_{t,\text{loc}}^*(I_{S_t'}\otimes O)) ]|.
    \end{aligned}
  \end{equation}
  By Lemma \ref{lemma:locality_preserving}, the support of $ \mathcal{U}_{t,\text{loc}}(I_{S_t'} \otimes O)]|$ is supported on a ball of radius $r+D$. By applying Lemma \ref{lemma:state_systemsink}, $\delta_1\leq \mathcal{O}(\epsilon\|O\|(r+D)^2)$. For the second term,
    \begin{equation}
      \begin{aligned}
        \delta_2 &\leq\|O \| \|\mathcal{U}_{t,\text{loc}}^* - \tilde{\mathcal{U}}_{t,\text{loc}}^* \|_{\diamond} \\
        &\leq \|O \|\|\mathcal{U}_{t,\text{loc}}^* - \tilde{\mathcal{U}}_{t}^{(1,i_1)*}\circ \mathcal{U}_{t,\text{loc}}^{*'} \|_{\diamond} + \|O \|\|\tilde{\mathcal{U}}_{t}^{(1,i_1)*}\circ \mathcal{U}_{t,\text{loc}}^{*'} - \tilde{\mathcal{U}}_{t,\text{loc}}^* \|_{\diamond} \\
        &\leq \|\mathcal{U}_{t}^{(1,i_1)*} - \tilde{\mathcal{U}}_t^{(1,i_1)*} \|_{\diamond} + \|\mathcal{U}_{t,\text{loc}}^{*'} - \tilde{\mathcal{U}}_{t,\text{loc}}^{*'} \|_{\diamond}, \\
        &\leq \epsilon \|O \|  + \|O \| \|\mathcal{U}_{t,\text{loc}}^{*'} - \tilde{\mathcal{U}}_{t,\text{loc}}^{*'} \|_{\diamond},
        \end{aligned}
    \end{equation}
    where
    \begin{equation}
      \begin{aligned}
    \mathcal{U}_{t,\text{loc}}^{*'} &= \underbrace{\mathcal{U}_{t}^{(1,i_2)*}\circ \cdots \mathcal{U}_t^{(1,i_{n_1})*} \circ \cdots \circ \mathcal{U}_t^{(D,i_1)*} \circ \cdots \circ \mathcal{U}_t^{(D,i_{n_D})*}}_{n-1}\\
    \tilde{\mathcal{U}}_{t,\text{loc}}^{*'} &= \underbrace{\tilde{\mathcal{U}}_{t}^{(1,i_2)*}\circ \cdots \tilde{\mathcal{U}}_t^{(1,i_{n_1})*} \circ \cdots \circ \tilde{\mathcal{U}}_t^{(D,i_1)*} \circ \cdots \circ \tilde{\mathcal{U}}_t^{(D,i_{n_D})*}}_{n-1}.
      \end{aligned}
      \end{equation}
      By iterating this bound $n=\mathcal{O}(rD+D^2)$ times, $\delta_2\leq \mathcal{O}(\epsilon\|O \|(rD+D^2))$. Thus, $\delta_1+\delta_2 \leq \mathcal{O}(\epsilon \|O \|(r+D)^2)$.
\end{proof}
\subsection{Operators\label{section:observables}}
Here we estimate the difference between $O$ and $\tilde{O}$.
\begin{lem}
  \label{lemma:observables}
  Let $O$ be a Pauli operator with $|\Supp{O}|=n$.
  \begin{equation}
    \| O - \tilde{O} \| \leq n\epsilon.
  \end{equation}
\end{lem}
\begin{proof}
   Without loss of generality, let $O= \otimes_{i=1}^{n} \sigma^{a_i}$, $\tilde{O}=\otimes_{i=1}^{n} \tilde{\sigma}^{a_i}$ and define $O_j= \otimes_{i=j}^n \sigma^{a_i}$, $\tilde{O}_j = \otimes_{i=j}^n \tilde{\sigma}^{a_i}$. Then
\begin{equation}
  \begin{aligned}
    \| O - \tilde{O} \| &\leq \|\sigma^{a_1}\otimes (O_2 - \tilde{O}_2)\| + \|(\sigma^{a_1} - \tilde{\sigma}^{a_1})\otimes \tilde{O}_2 \| \\
    &\leq \|O_2- \tilde{O}_2\| + \epsilon \\
    &\leq n\epsilon.
  \end{aligned}
\end{equation}
  \end{proof}

\section{Reduction to the bath dynamics \label{section:Heisenberg}}
In this Section, we relate the difference between the noiseless and the noisy expectation value to the difference between two time-dependent expectation values. Specifically, for a normalized local operator $O$,
\begin{equation}
  |\Tr[ \rho O ] - \Tr[\tilde{\rho} \tilde{O}] |\leq \mathcal{O}(\epsilon) +  |\Tr[\rho^B(t) O'] - \Tr[\tilde{\rho}^B(t) O'],
\end{equation}
where $O'$ is a normalized operator on the bath. The support of $O'$ and $t$ are determined by the support and the location of $O$. The dynamics of $\rho^B(t)$ is defined in terms of $T_B^{BS_t*}$. Similarly, the dynamics of  $\tilde{\rho}^B(t)$ is defined in terms of  $\tilde{T}_B^{BS_t*}$. 

This reduction is useful because the dynamics can be viewed as a discrete-time analogue of a certain dissipative dynamics. Its stability to extensive perturbation has been studied\cite{Cubitt2013,Lucia2014}, and we can use the techniques developed therein to prove the main result; this is the content of Section \ref{section:local_uniform_contraction}.

There is an intuitive explanation. Without loss of generality, consider an observable $O$ supported on a ball of bounded radius. Since the dual of $T_B^{BS_t}$(as well as the dual of $\tilde{T}_B^{BS_t}$) maps operators supported on $BS_t$ to operators on $B$, $O$ is eventually mapped into an operator supported only on the bath. This process takes time proportional to the radius of the support of $O$; see FIG.\ref{fig:reduction}. Because the radius was assumed to be bounded, the process takes a bounded amount of time. Within this time, the support of $O$ can be only expanded by a bounded amount, because the channels are applied by applying a bounded number of finite-depth quantum circuits. This means that the number of circuit elements that can influence $O$ within this time is bounded, independent of the system size. The effect of noise on these circuit elements can be conservatively bounded by summing over each of their contributions, which is in the order of $\mathcal{O}(\epsilon)$; the observable undergoing a noisy time evolution, up to this point, is close to the observable undergoing a noiseless time evolution up to an error of $\mathcal{O}(\epsilon)$. Furthermore, at this point, the observable is supported on a localized region of the bath. The closeness of the noisy and noiseless expectation value can thus be analyzed by studying the behavior of local observables supported on the bath.
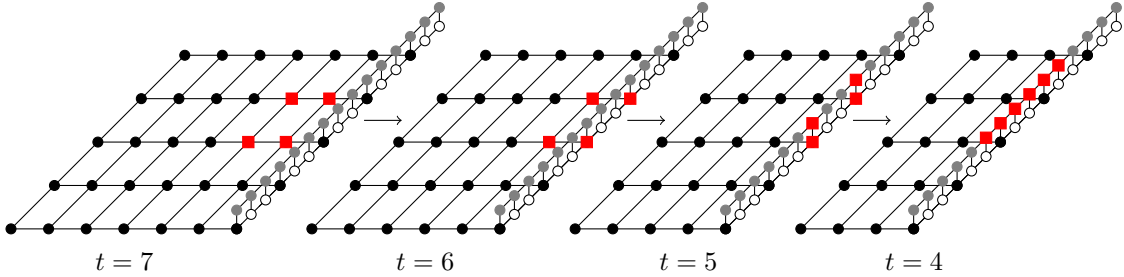
\begin{figure}[h]
\begin{tikzpicture}[scale=0.5]
  \node[below](t) at (4.4,0,16) {$t=7$};
    \foreach \i in {7}
        \foreach \j in {1,...,15}
                {\draw (\i,0.5,\j)--(\i,0,\j);}
    \draw (7,0.5,1)--(7,0.5,15);
    \draw (7,0,1)--(7,0,15);
    \foreach \i in {1,...,5}
             {\draw (1,0,3*\i)--(7,0,3*\i);
               \draw[dashed] (7,0,3*\i)--(7,0,3*\i);
             }
    \foreach \i in {1,...,6}
             {\draw (\i,0,3)--(\i,0,15);}
    \foreach \i in {7}
             {\draw[dashed] (\i,0,3)--(\i,0,15);}
    \foreach \i in {7}
        \foreach \j in {1,...,15}
                {\node[circle,draw=black,fill=white,scale=0.4](\i) at (\i, 0, \j) {};}
    \foreach \i in {1,...,7}
        \foreach \j in {1,...,5}
                {\node[circle,draw=black,fill=black,scale=0.4](\i) at (\i,0, 3*\j) {};}
    \foreach \j in {1,...,15}
             {\node[circle,draw=gray,fill=gray,scale=0.4](\j) at (7,0.5,\j) {};}
             \draw[->] (7.5,0,7.5)--(8.5,0,7.5);

             \node[minimum size=0.3cm,draw=red,fill=red,scale=0.5](1) at (5,0,6) {};
             \node[minimum size=0.3cm,draw=red,fill=red,scale=0.5](1) at (6,0,6) {};
             \node[minimum size=0.3cm,draw=red,fill=red,scale=0.5](1) at (5,0,9) {};
             \node[minimum size=0.3cm,draw=red,fill=red,scale=0.5](1) at (6,0,9) {};

             \begin{scope}[xshift=8cm]
               \node[below](t) at (4.4,0,16) {$t=6$};
    \foreach \i in {6}
        \foreach \j in {1,...,15}
                {\draw (\i,0.5,\j)--(\i,0,\j);}
    \draw (6,0.5,1)--(6,0.5,15);
    \draw (6,0,1)--(6,0,15);
    \foreach \i in {1,...,5}
             {\draw (1,0,3*\i)--(6,0,3*\i);
             }
    \foreach \i in {1,...,6}
             {\draw (\i,0,3)--(\i,0,15);}
    \foreach \i in {6}
        \foreach \j in {1,...,15}
                {\node[circle,draw=black,fill=white,scale=0.4](\i) at (\i, 0, \j) {};}
    \foreach \i in {1,...,6}
        \foreach \j in {1,...,5}
                {\node[circle,draw=black,fill=black,scale=0.4](\i) at (\i,0, 3*\j) {};}
    \foreach \j in {1,...,15}
             {\node[circle,draw=gray,fill=gray,scale=0.4](\j) at (6,0.5,\j) {};}
             
             \node[minimum size=0.3cm,draw=red,fill=red,scale=0.5](1) at (5,0,6) {};
             \node[minimum size=0.3cm,draw=red,fill=red,scale=0.5](1) at (6,0,6) {};
             \node[minimum size=0.3cm,draw=red,fill=red,scale=0.5](1) at (5,0,9) {};
             \node[minimum size=0.3cm,draw=red,fill=red,scale=0.5](1) at (6,0,9) {};
 \draw[->] (6.5,0,7.5)--(7.5,0,7.5);

             \end{scope}
             
             \begin{scope}[xshift=15cm]
               \node[below](t) at (4.4,0,16) {$t=5$};
    \foreach \i in {5}
        \foreach \j in {1,...,15}
                {\draw (\i,0.5,\j)--(\i,0,\j);}
    \draw (5,0.5,1)--(5,0.5,15);
    \draw (5,0,1)--(5,0,15);
    \foreach \i in {1,...,5}
             {\draw (1,0,3*\i)--(5,0,3*\i);
             }
    \foreach \i in {1,...,5}
             {\draw (\i,0,3)--(\i,0,15);}
    \foreach \i in {5}
        \foreach \j in {1,...,15}
                {\node[circle,draw=black,fill=white,scale=0.4](\i) at (\i, 0, \j) {};}
    \foreach \i in {1,...,5}
        \foreach \j in {1,...,5}
                {\node[circle,draw=black,fill=black,scale=0.4](\i) at (\i,0, 3*\j) {};}
    \foreach \j in {1,...,15}
             {\node[circle,draw=gray,fill=gray,scale=0.4](\j) at (5,0.5,\j) {};}
             
             \node[minimum size=0.3cm,draw=red,fill=red,scale=0.5](1) at (5,0,6) {};
             \node[minimum size=0.3cm,draw=red,fill=red,scale=0.5](1) at (5,0,9) {};
             \node[minimum size=0.3cm,draw=red,fill=red,scale=0.5](1) at (5,0.5,6) {};
             \node[minimum size=0.3cm,draw=red,fill=red,scale=0.5](1) at (5,0.5,9) {};
                 \draw[->] (5.5,0,7.5)--(6.5,0,7.5);
             \end{scope}
             \begin{scope}[xshift=21cm]
               \node[below](t) at (4.4,0,16) {$t=4$};
    \foreach \i in {4}
        \foreach \j in {1,...,15}
                {\draw (\i,0.5,\j)--(\i,0,\j);}
    \draw (4,0.5,1)--(4,0.5,15);
    \draw (4,0,1)--(4,0,15);
    \foreach \i in {1,...,5}
             {\draw (1,0,3*\i)--(4,0,3*\i);
             }
    \foreach \i in {1,...,4}
             {\draw (\i,0,3)--(\i,0,15);}
    \foreach \i in {4}
        \foreach \j in {1,...,15}
                {\node[circle,draw=black,fill=white,scale=0.4](\i) at (\i, 0, \j) {};}
    \foreach \i in {1,...,4}
        \foreach \j in {1,...,5}
                {\node[circle,draw=black,fill=black,scale=0.4](\i) at (\i,0, 3*\j) {};}
    \foreach \j in {1,...,15}
             {\node[circle,draw=gray,fill=gray,scale=0.4](\j) at (4,0.5,\j) {};}
             
             \node[minimum size=0.3cm,draw=red,fill=red,scale=0.5](1) at (4,0.5,6) {};
             \node[minimum size=0.3cm,draw=red,fill=red,scale=0.5](1) at (4,0.5,9) {};
             
             \node[minimum size=0.3cm,draw=red,fill=red,scale=0.5](1) at (4,0.5,7) {};
             \node[minimum size=0.3cm,draw=red,fill=red,scale=0.5](1) at (4,0.5,8) {};
             \node[minimum size=0.3cm,draw=red,fill=red,scale=0.5](1) at (4,0.5,10) {};
             \node[minimum size=0.3cm,draw=red,fill=red,scale=0.5](1) at (4,0.5,5) {};
             \end{scope}
\end{tikzpicture}
\caption{In the Heisenberg picture, a local operator supported on the system is mapped into a local operator supported on a system-bath composite, and then to a local operator in bath. The red squares represent the support of an operator. Here $D=1$.\label{fig:reduction}}
\end{figure}

Once the operator is mapped into an operator supported only on the bath, its time evolution is determined by the following channel:
\begin{equation}
  \begin{aligned}
    T_{t}(\cdot) &= \Tr_{S_t}[T_B^{BS_t}(\cdot)] \\
    \tilde{T}_t(\cdot) &= \Tr_{S_t}[\tilde{T}_B^{BS_t}(\cdot)].
  \end{aligned}
\end{equation}
Also,
\begin{equation}
  \label{eq:bath_dynamics}
  \begin{aligned}
    T_{[t,t']}(\cdot) &= \Tr_{S_{[t,t']}}[T_B^{BS_{[t,t']}}(\cdot)]  \\
    \tilde{T}_{[t,t']}(\cdot) &= \Tr_{S_{[t,t']}}[\tilde{T}_B^{BS_{[t,t']}}(\cdot)].
  \end{aligned}
\end{equation}
Below, the main result of this Section is stated and proved.

\begin{lem}
  For a Pauli operator $O$ supported on a ball of radius $r$ from coordinate $t=t_0+1$ to $t=t_0+2r$,
  \begin{equation}
    |\Tr[\rho O - \tilde{\rho}\tilde O]| \leq \mathcal{O}(\epsilon r^3D^2\|O \|) + \delta,
    \end{equation}
where 
\begin{equation}
  \delta =\|O \| \sup_{\substack{\|O' \|\leq 1}}|\Tr[T_{[1,t_0]}^*(\rho^B) O' - \tilde{T}_{[1,t_0]}^*(\tilde{\rho}^B) O']|,
\end{equation}
where $O'\in \mathcal{B}(\mathcal{H}_B)$ is an operator supported on a ball of radius $\mathcal{O}(Dr)$. 
\label{lemma:Heisenberg}
\end{lem}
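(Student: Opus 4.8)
The plan is to pass to the Heisenberg picture, where $\Tr[\rho O]$ is computed by acting on $O$ with the dual transition maps in reverse time order, and to use that $O$ occupies only the $2r$ rows $t\in[t_0+1,t_0+2r]$: the dual maps for later times act trivially, exactly $2r$ of them drag $O$ onto the bath, and composing the rest is the bath dynamics $T_{[1,t_0]}$.

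\emph{Step 1 (reduction to the bath).} Using the duality $\Tr[\tran{t}(\sigma)P]=\Tr[\sigma\,\trand{t}(P)]$ repeatedly I would write $\Tr[\rho O]=\Tr[\rho^{B}\,\trand{1}\circ\cdots\circ\trand{\ell_y}(O)]$. For $t>t_0+2r$ the operator being processed has no support on $B$ or $S_t$, so by the explicit form of $\trand{t}$ (Lemma \ref{lemma:dual}), the unitality of $\mathcal{U}_t^{*}$, and $\Tr[\omega^{S_tS_t'}]=1$, the map $\trand{t}$ fixes it; dropping these maps leaves $\Tr[\rho O]=\Tr[\rho^{B}\,\trand{1}\circ\cdots\circ\trand{t_0}(\hat O)]$ with $\hat O:=\trand{t_0+1}\circ\cdots\circ\trand{t_0+2r}(O)$. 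Since each $\mathcal{U}_t$ touches only $BS_tS_t'$, peeling row $t$ leaves the support on $S_{[1,t-1]}$ intact and carries the $S_t$-part onto the bath, so after these $2r$ maps $\hat O$ is supported on $B$ alone; iterating Lemma \ref{lemma:locality_preserving} it lies in a ball of radius $r+2rD=\mathcal{O}(Dr)$, with $\|\hat O\|\le\|O\|$. On bath operators the remaining composition $\trand{1}\circ\cdots\circ\trand{t_0}$ coincides with $T_{[1,t_0]}^{*}$ (by the definition of $T_{[1,t_0]}$), so $\Tr[\rho O]=\Tr[\rho^{B}\,T_{[1,t_0]}^{*}(\hat O)]$; the same manipulation with the noisy channels gives $\Tr[\tilde\rho\tilde O]=\Tr[\tilde\rho^{B}\,\tilde T_{[1,t_0]}^{*}(\hat{\tilde O})]$, with $\hat{\tilde O}:=\ttrand{t_0+1}\circ\cdots\circ\ttrand{t_0+2r}(\tilde O)$ obeying the same support and norm bounds.

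\emph{Step 2 (splitting).} Then $\Tr[\rho O-\tilde\rho\tilde O]=A+B$ with $A=\Tr[\rho^{B}T_{[1,t_0]}^{*}(\hat O)-\tilde\rho^{B}\tilde T_{[1,t_0]}^{*}(\hat O)]$ and $B=\Tr[\tilde\rho^{B}\tilde T_{[1,t_0]}^{*}(\hat O-\hat{\tilde O})]$. Normalizing $\hat O$ and recalling $\|\hat O\|\le\|O\|$ and that $\Supp{\hat O}$ is a ball of radius $\mathcal{O}(Dr)$, I get $|A|\le\delta$. Since $\tilde T_{[1,t_0]}^{*}$ is the dual of a channel it is an operator-norm contraction, and $\|\tilde\rho^{B}\|_1=1$, so $|B|\le\|\hat O-\hat{\tilde O}\|$. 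Everything thus reduces to showing $\|\hat O-\hat{\tilde O}\|\le\mathcal{O}(\epsilon r^3D^2\|O\|)$.

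\emph{Step 3 (bounding $\|\hat O-\hat{\tilde O}\|$).} First replace $\tilde O$ by $O$ inside $\hat{\tilde O}$ at cost $\|O-\tilde O\|\le n\epsilon=\mathcal{O}(r^2\epsilon)$ (Lemma \ref{lemma:observables}, with $|\Supp{O}|=\mathcal{O}(r^2)$). Then telescope the two length-$2r$ compositions, swapping one noiseless dual map for its noisy counterpart at a time; as duals of channels are operator-norm contractions this leaves $\sum_{j=1}^{2r}\|(\trand{t_j}-\ttrand{t_j})(P_j)\|$, where each intermediate operator $P_j$ is $O$ acted on by some of the innermost dual maps and hence, by iterating Lemma \ref{lemma:locality_preserving}, is supported on a ball of radius $\mathcal{O}(Dr)$ with norm $\le\|O\|$. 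Lemma \ref{lemma:local_approximate_equivalence} bounds each of the $2r$ terms by $\mathcal{O}(\epsilon\|O\|(Dr)^2)$, giving the total $\mathcal{O}(\epsilon r^3D^2\|O\|)$; the $\|O-\tilde O\|$ contribution is absorbed. Hence $|\Tr[\rho O-\tilde\rho\tilde O]|\le\mathcal{O}(\epsilon r^3D^2\|O\|)+\delta$.

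\emph{Main obstacle.} The delicate point is the support accounting in Step 1 — that peeling exactly the $2r$ rows carrying $O$ collapses it onto the bath while the radius grows only to $\mathcal{O}(Dr)$, which relies on each $\mathcal{U}_t$ acting only on $BS_tS_t'$ — and recognizing that it is this linear-in-$r$ radius, fed into the quadratic per-step bound of Lemma \ref{lemma:local_approximate_equivalence} and summed over $2r$ steps, that yields the $r^3D^2$ factor rather than a naive $\mathcal{O}(\epsilon r)$.
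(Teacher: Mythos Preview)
Your proposal is correct and follows essentially the same route as the paper: pass to the Heisenberg picture, drop the dual maps for $t>t_0+2r$, peel the $2r$ remaining rows onto the bath (radius $\mathcal{O}(Dr)$ via Lemma~\ref{lemma:locality_preserving}), and telescope the noiseless versus noisy compositions using Lemma~\ref{lemma:local_approximate_equivalence} to get the $\mathcal{O}(\epsilon r^3D^2\|O\|)$ term, with the residual bath dynamics giving $\delta$. The only cosmetic differences are that the paper splits off $|\Tr[\tilde\rho(O-\tilde O)]|$ at the outset (via Lemma~\ref{lemma:observables}) rather than absorbing it into Step~3, and feeds the \emph{noisy}-evolved bath operator $O_{t_0+1}'$ into $\delta$ rather than your noiseless $\hat O$; since $\delta$ is a supremum over all such operators, either choice works.
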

\begin{proof}
  \begin{equation}
    \begin{aligned}
      |\Tr[\rho O - \tilde{\rho} \tilde{O}]| &\leq |\Tr[(\rho - \tilde{\rho}) O]| + |\Tr[\tilde{\rho}(O-\tilde{O})]|.
    \end{aligned}
  \end{equation}
  The first term is bounded as
  \begin{equation}
    \begin{aligned}
      |\Tr[(\rho -\tilde{\rho})O]| &= |\Tr[(\trant{1}{\ell_y}(\rho^B) - \ttrant{1}{\ell_y}(\tilde{\rho}^B))O]| \\
      &=|\Tr[\rho^B\trantd{1}{\ell_y}(O) - \tilde{\rho}^B\ttrantd{1}{\ell_y}(O)]| \\
      &=|\Tr[\rho^B\trantd{1}{t_0+2r}(O) - \tilde{\rho}^B\ttrantd{1}{t_0+2r}(O)]| \\
      &=|\Tr[\rho^B\trantd{1}{t_0}\circ \trantd{t_0+1}{t_0+2r}(O) - \tilde{\rho}^B\ttrantd{1}{t_0}\circ \ttrantd{t_0+1}{t_0+2r}(O)]| \\
      &=|\Tr[\rho^BT_{[1,t_0]}^*\circ \trantd{t_0+1}{t_0+2r}(O) - \tilde{\rho}^B\tilde{T}_{[1,t_0]}^*\circ \ttrantd{t_0+1}{t_0+2r}(O)]| \\
     &\leq |\Tr[\rho^BT_{[1,t_0]}^*(O_{t_0+1} - O_{t_0+1}') ]| + |\Tr[\rho^BT_{[1,t_0]}^*(O_{t_0+1}')] - \Tr[\tilde{\rho}^B \tilde{T}_{[1,t_0]}^*(O_{t_0+1}')]|,
    \end{aligned}
  \end{equation}
  where
  \begin{equation}
    \begin{aligned}
    O_{t} &= \trantd{t}{\ell_y}(O)\\
    O_{t}' &= \ttrantd{t}{\ell_y}(O).
    \end{aligned}
  \end{equation}
  Then,
  \begin{equation}
    \begin{aligned}
      \|O_{t_0+1} - O_{t_0+1}'\| &\leq \|O_{t_0+1}- \tilde{T}_B^{BS_{t_0+1}*}(O_{t_0+2}) \|  + \|\tilde{T}_B^{BS_{t_0+1}*}(O_{t_0+2}) - O_{t_0+1}' \|  \\
      &\leq \mathcal{O}(\epsilon\|O \|r^2D^2) + \|\tilde{T}_B^{BS_{t_0+1}*}(O_{t_0+2} - O_{t_0+2}') \|\\
      &\leq \mathcal{O}(\epsilon\|O \|r^2D^2) + \|O_{t_0+2} - O_{t_0+2}' \|,
    \end{aligned}
  \end{equation}
  where in the second line we used Lemma \ref{lemma:locality_preserving} and \ref{lemma:local_approximate_equivalence}. Specifically, the action of the noiseless and noisy channel on $O_{t_0+2}$ is bounded by $\mathcal{O}(\epsilon \|O_{t_0+2}\| (R+D)^2)$, where $O_{t_0+2}$ is supported on a ball of radius $R$. Since $\|O_{t_0+2}\| \leq \|O \|$ and $R$ is bounded by $r(1+D)$, the dominant term is $\mathcal{O}(\epsilon \|O \|r^2D^2)$. The third line follows from the fact that quantum channels are norm-nonincreasing. By iterating this bound,
  \begin{equation}
    \begin{aligned}
      |\Tr[(\rho - \tilde{\rho})O ]| &\leq \mathcal{O}(\epsilon \|O \|r^3D^2) + |\Tr[\rho^BT_{[1,t_0]}^*(O_{t_0+1}')] - \Tr[\tilde{\rho}^B \tilde{T}_{[1,t_0]}^*(O_{t_0+1}')]|,
    \end{aligned}
  \end{equation}
  where $O_{t_0+1}$ is supported on a ball of radius $\mathcal{O}(Dr)$. 
  
    Lastly, by Lemma \ref{lemma:observables}
    \begin{equation}
      \|O - \tilde{O}\| \leq \mathcal{O}(\epsilon r^2\|O \| ).
    \end{equation}
    Combining these bounds, the claim is proved.
\end{proof}

\section{Local rapid mixing\label{section:local_uniform_contraction}}
In Section \ref{section:Heisenberg}, the difference between the noiseless and noisy local expectation value was bounded in terms of two time-dependent expectation values;  up to an $\mathcal{O}(\epsilon)$ correction, the effect of noise on local expectation value is bounded by  $\sup_{\|O\|\leq 1} |\Tr[\rho^BT_{[1,t_0]}^*(O)] - \Tr[\tilde{\rho}^B\tilde{T}_{[1,t_0]}^*(O)]|$, where $O$ is a local observable supported on the bath. In this Section, this term is bounded by an object which only depends on the properties of the noiseless time evolution($T_{[t,t']}$). This bound holds for any circuits, but becomes nontrivial only for a certain family. Circuits belonging to such family are noise-resilient.

This bound is expressed in terms of the \emph{contraction of $T_{[t,t']}$ at a certain length scale}.
\begin{defi}
  A contraction of $T_{[t,t']}$ at length scale $\ell$ is
  \begin{equation}
    \eta^{\ell}(T_{[t,t']}) = \sup_{\substack{\| O \|\leq 1 \\ \Supp{O} =A}}\| T_{[t,t']}^*(O) - \Phi \circ T_{[t,t']}^*(O) \|,
  \end{equation}
for any $A$ which is a ball of radius $\ell$, where $\Phi$ is the completely depolarizing channel. We say $T_{[t,t']}$ is $(\Delta, \ell_0)$-locally rapidly mixing if 
  \begin{equation}
    \eta^{\ell}(T_{[t,t']}) \leq c\ell^{\alpha}e^{-\gamma t} + \Delta
  \end{equation}
  for all $\ell\leq \ell_0$, where $c,\alpha,\gamma>0$ are numerical constants.
\end{defi}
We show that
\begin{thm}
  \label{thm:local_rapid_mixing}
  If $T_{[t,t']}$ is  $(\Delta, \ell_0)-$locally rapidly mixing for some $c>0$ with $D=\mathcal{O}(1)$, then for any observable $O$ supported on a ball of radius $r=\mathcal{O}(1)$, $r<\ell_0$,
  \begin{equation}
    |\Tr[T_{[1,t_0]}(\rho^B) O - \tilde{T}_{[1,t_0]}(\tilde{\rho}^B)O]| \leq \mathcal{O}(\epsilon \log^2(1/\epsilon) + \ell_y\Delta)\|O \|.
    \end{equation}
  \end{thm}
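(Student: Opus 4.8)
The plan is to run a discrete Duhamel (telescoping) expansion in the time index, with the accumulation of noise tamed by the local rapid mixing of $T_{[t,t']}$, in the spirit of the dissipative‑stability analysis of Lucia et al. Fix a cutoff $m=\Theta(\gamma^{-1}\log(1/\epsilon))$ so that $e^{-\gamma m}\leq\epsilon$ (and suppose, as holds for the circuits analyzed later, that $\ell_0$ is at least of this order, so that a window of length $m$ can be used). Define $g(t):=\sup\big|\Tr\big[(T_{[1,t]}(\rho^B)-\tilde{T}_{[1,t]}(\tilde{\rho}^B))\,O'\big]\big|$, the supremum running over bath operators $O'$ with $\|O'\|\leq1$ supported on a ball of radius at most $\ell_0$. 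Splitting $T_{[1,t]}=T_{[t-m+1,t]}\circ T_{[1,t-m]}$ (and likewise the noisy evolution) and inserting the hybrid state $T_{[t-m+1,t]}\circ\tilde{T}_{[1,t-m]}(\tilde{\rho}^B)$, I would establish, for $t>m$, the recursion
\begin{equation}
  g(t)\leq\eta^{r}\!\big(T_{[t-m+1,t]}\big)\,g(t-m)+\mathcal{O}(\epsilon m^{2})\|O\|,
\end{equation}
together with the base case $g(t)\leq\mathcal{O}(\epsilon m^{2})\|O\|$ for $t\leq m$; unrolling this over the $\mathcal{O}(\ell_y/m)$ chunks and feeding in the local rapid mixing bound yields the theorem.

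The additive term $\mathcal{O}(\epsilon m^{2})\|O\|$ comes from controlling an entire $m$-step window at once rather than step by step. By Lemma~\ref{lemma:locality_preserving} the noiseless back-evolution $T_{[t-m+1,t]}^{*}(O)$ of the local observable $O$ is supported on a ball of radius $\mathcal{O}(r+mD)=\mathcal{O}(\log(1/\epsilon))$, and, exactly as in the proof of Lemma~\ref{lemma:local_approximate_equivalence}, only the gates and fresh system/sink preparations inside the backward light cone of $O$ influence it. Counting these, the $j$-th transition map back contributes $\mathcal{O}((r+jD)D)$ gates and $\mathcal{O}(r+jD)$ preparations within the cone, so the total number of faulty elements over the window is $\mathcal{O}(m^{2})$ (using $D=\mathcal{O}(1)$); replacing each noiseless element by its noisy version costs $\mathcal{O}(\epsilon\|O\|)$ by the diamond-norm assumption on the gates and by Lemmas~\ref{lemma:state_bath} and~\ref{lemma:state_systemsink} for the preparations. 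Hence $\|(T_{[t-m+1,t]}^{*}-\tilde{T}_{[t-m+1,t]}^{*})(O)\|\leq\mathcal{O}(\epsilon m^{2})\|O\|$; pairing this against the state $\tilde{T}_{[1,t-m]}(\tilde{\rho}^B)$ handles the noise inside the window, while the same estimate with $T_{[1,t_0]}$ in place of the window, together with Lemma~\ref{lemma:state_bath} applied to $\rho^B-\tilde{\rho}^B$, handles the base case $t_0\leq m$.

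The multiplicative term in the recursion comes from local rapid mixing by the ``subtract a constant'' trick. Write $T_{[t-m+1,t]}^{*}(O)=\Phi\circ T_{[t-m+1,t]}^{*}(O)+R$. The first summand is a scalar multiple of the identity, of modulus at most $\|O\|$, and it is annihilated when paired against $T_{[1,t-m]}(\rho^B)-\tilde{T}_{[1,t-m]}(\tilde{\rho}^B)$ because that operator is traceless, both $T_{[1,t-m]}(\rho^B)$ and $\tilde{T}_{[1,t-m]}(\tilde{\rho}^B)$ being normalized states. The remainder $R$ is, by the definition of the contraction $\eta^{r}$, again supported on a ball of radius $\mathcal{O}(r+mD)\leq\ell_0$, with $\|R\|\leq\eta^{r}(T_{[t-m+1,t]})\|O\|$; testing it against the same traceless operator therefore contributes at most $\eta^{r}(T_{[t-m+1,t]})\,g(t-m)$, which is the claimed recursion. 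Since local rapid mixing gives $\eta^{r}(T_{[t-m+1,t]})\leq cr^{\alpha}e^{-\gamma m}+\Delta\leq\mathcal{O}(\epsilon)+\Delta$ (using $r=\mathcal{O}(1)$ and $e^{-\gamma m}\leq\epsilon$), unrolling the recursion makes the exponentially suppressed pieces, together with the $\mathcal{O}(\epsilon m^{2})$ additive terms, collapse into $\mathcal{O}(\epsilon\log^{2}(1/\epsilon))\|O\|$, while the $\Delta$-floor that the contraction cannot remove is (conservatively) paid once per transition map, producing the $\mathcal{O}(\ell_y\Delta)\|O\|$ term.

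The step I expect to be the main obstacle, and what makes the bookkeeping delicate, is keeping the bath size $\ell_x$ — and any power of $\ell_y$ beyond the single factor on the $\Delta$-floor — out of the estimate. The noise at a single transition map is a sum of $\Theta(\ell_x)$ local perturbations, and any crude diamond-norm bound on $T_s-\tilde{T}_s$ is fatal; the escape is precisely that the depolarized part of the noiselessly back-evolved $O$ is killed because the transition maps are trace preserving, so that only the $\mathcal{O}(1)$-sized support of $O$ — dilated to the $\mathcal{O}(\log(1/\epsilon))$ light cone over a window — ever matters, and on that region the remaining noiseless evolution has already mixed to within $\eta^{r}$ of a constant. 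Making this compatible with the light-cone growth and with the precise form of the rapid mixing bound, so that the $\epsilon$-dependent part of the final estimate is uniform in system size and only $\Delta$ accumulates, is the ``minor subtlety'' involved in porting the continuous-time dissipative-stability argument to this time-dependent, geometrically local setting; a secondary point to handle carefully is the interplay between the chunk length $m\sim\log(1/\epsilon)$ and the available mixing length scale $\ell_0$.
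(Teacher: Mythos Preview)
Your chunked recursion is a different route from the paper's, and there is a concrete gap in it as written. You define $g(t)$ as a supremum over \emph{all} bath observables $O'$ of radius at most $\ell_0$, but then justify both the additive $\mathcal{O}(\epsilon m^{2})\|O\|$ term and the contraction factor $\eta^{r}$ using the fixed observable $O$ of radius $r$. For a generic $O'$ in the supremum the window error is $\mathcal{O}(\epsilon(\mathrm{radius}(O')+mD)^{2})$, which can be as large as $\mathcal{O}(\epsilon\ell_0^{2})$, and the relevant contraction is $\eta^{\mathrm{radius}(O')}$, not $\eta^{r}$. Even if you track only the specific $O$, the remainder $R=T_{[t-m+1,t]}^{*}(O)-\Phi\circ T_{[t-m+1,t]}^{*}(O)$ already has radius $r+mD$, so the next iteration faces the larger-radius problem; after $k$ chunks the radius is $r+kmD$, and you must either keep this below $\ell_0$ for all $k\leq t_0/m$ or argue that the geometric factor renders all but $\mathcal{O}(1)$ iterations negligible. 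The latter is true, but only once the recursion carries the correct, $k$-dependent constants, which yours does not. You flag ``the interplay between $m$ and $\ell_0$'' as a secondary point; it is in fact the point at which your recursion, as stated, fails to close.

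The paper avoids the issue by telescoping one step at a time,
\[
T_{[1,t]}^{*}-\tilde{T}_{[1,t]}^{*}=\sum_{n=1}^{t}\tilde{T}_{[1,n-1]}^{*}\circ(T_{n}^{*}-\tilde{T}_{n}^{*})\circ T_{[n+1,t]}^{*},
\]
so that in every summand the inner operator $T_{[n+1,t]}^{*}(O)$ is the back-evolution of the \emph{original} $O$ and the relevant contraction is always $\eta^{r}$ with $r=\mathcal{O}(1)$. Each summand is then bounded by the minimum of a light-cone estimate $\mathcal{O}(\epsilon(t-n))$ (Lemmas~\ref{lemma:locality_preserving} and~\ref{lemma:local_approximate_equivalence}) and the mixing estimate $2(cr^{\alpha}e^{-\gamma(t-n)}+\Delta)$; the crossover at $t-n\sim\log(1/\epsilon)$ yields the $\epsilon\log^{2}(1/\epsilon)$, and the $\Delta$ floor summed over $n$ gives $\ell_y\Delta$. (The initial-state piece $\delta_1$ is handled the same way.) Your underlying mechanism---kill the scalar part by tracelessness/unitality, use mixing on the remainder---is exactly right; it is the iterative packaging that leaks. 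In fact a \emph{single} chunk of length $m$ followed by the trivial bound $2\|R\|$ on the remainder already gives $\mathcal{O}(\epsilon\log^{2}(1/\epsilon)+\Delta)\|O\|$, with no $\ell_y$ multiplying $\Delta$; so your idea, executed without the recursion, is both correct and slightly sharper than the stated theorem.
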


  Let us first sketch the proof. Note that
  \begin{equation}
    |\Tr[T_{[1,t]}(\rho^B) O - \tilde{T}_{[1,t_0]}(\tilde{\rho}^B)O] | \leq \delta_1 + \delta_2,
  \end{equation}
  where
  \begin{equation}
    \begin{aligned}
      \delta_1 &= |\Tr[(\rho^B - \tilde{\rho}^B) T_{[1,t]}^*(O)]|\\
      \delta_2 &= |\Tr[\tilde{\rho}^B(T^{*}_{[1,t]}(O) - \tilde{T}_{[1,t]}^*(O))]|
    \end{aligned} 
  \end{equation}
  by the triangle inequality. At small $t$, $\delta_1$ is bounded by $\mathcal{O}(\epsilon)$ because (i) $T_{[1,t]}(O)$ is a local observable and (ii) local expectation values of $\rho^B$ and $\tilde{\rho^B}$ deviate at most by $\mathcal{O}(\epsilon)$(cf. Lemma \ref{lemma:state_bath}). Also, $\delta_2$ is bounded by $\mathcal{O}(\epsilon)$ because the action of $T^{*}_{[1,t]}$ and $\tilde{T}_{[1,t]}^*$ on local observables differ at most by $\mathcal{O}(\epsilon)$(cf. Lemma \ref{lemma:local_approximate_equivalence}). At large $t$, for $\delta_1$, $T_{[1,t]}^*(O)$ can be approximated by $\Phi \circ T_{[1,t]}^*(O)$ up to an error $\Delta$. This operator is the (rescaled) identity operator, so its expectation value is independent of the state. Therefore, $\delta_1$ is bounded by $\Delta$. For $\delta_2$, note the following identity:
  \begin{equation}
    T_{[1,t]}^* - \tilde{T}_{[1,t]}^* = \sum_{n=1}^t\tilde{T}_{n-1}^* \circ (T_n^* - \tilde{T}_n^*)\circ T_{[n+1,t]}^*,
  \end{equation}
  where $\tilde{T}^*_0 = \mathcal{I}$ and $T_{[t+1,t]}^*=\mathcal{I}$. At small $n$, $T_{[n+1,t]}^*(O)$ can be approximated by $\Phi\circ T_{[n+1,t]}^*(O)$ up to an error $\Delta$. Since $T_n^*$ and $\tilde{T}_n^*$ are unital, their action on $\Phi\circ T_{[n+1,t]}^*(O)$ is identical. What remains is an error term which is bounded by $\Delta$. At $n$ close to $t$, $T_{[n+1,t]}$ is a local operator, and by Lemma \ref{lemma:local_approximate_equivalence} the action of $T_n^*$ and $\tilde{T}_n^*$ on this operator differs at most by $\mathcal{O}(\epsilon)$.

  Of course, this logic applies only to small and large $t$. What remains to be shown is a bound that interpolates between these two limits. Below, we derive such bounds for $\delta_1$ and $\delta_2$. Then Theorem \ref{thm:local_rapid_mixing} follows from these bounds.

  \begin{lem}
    Let $T_{[t,t']}$ is $(\Delta, \ell_0)-$locally rapidly mixing with $D=\mathcal{O}(1)$. For an operator $O$ supported on a ball of radius $r=\mathcal{O}(1)$ where $r<\ell_0$,  
    \begin{equation}
      \delta_1 \leq \mathcal{O}((\epsilon \log(1/\epsilon) + \Delta) \|O \|)
    \end{equation}
  \end{lem}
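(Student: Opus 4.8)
The plan is to bound $\delta_1 = |\Tr[(\rho^B - \tilde{\rho}^B) T_{[1,t_0]}^*(O)]|$ by splitting the time evolution at a judiciously chosen intermediate time. Introduce a cutoff $t_* \in [1,t_0]$, to be optimized at the end, and write
\begin{equation}
  T_{[1,t_0]}^*(O) = T_{[1,t_*]}^* \circ T_{[t_*+1,t_0]}^*(O).
\end{equation}
The idea is that for $t_0 - t_*$ large enough (order $\log(1/\epsilon)$), local rapid mixing makes $T_{[t_*+1,t_0]}^*(O)$ close to a multiple of the identity, so that what remains after applying $T_{[1,t_*]}^*$ is essentially state-independent; and for $t_0 - t_*$ small, the operator $T_{[1,t_0]}^*(O)$ is still local, so Lemma \ref{lemma:state_bath} gives an $\mathcal{O}(\epsilon)$ bound with a locality-dependent prefactor. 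The cutoff lets us trade these two regimes against each other.

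Concretely, I would proceed as follows. First, by Lemma \ref{lemma:locality_preserving} applied $t_0 - t_*$ times, $T_{[t_*+1,t_0]}^*(O)$ is supported on a ball of radius $r + D(t_0 - t_*)$. Provided this radius stays below $\ell_0$ — which is why we need $t_0 - t_* = \mathcal{O}(\log(1/\epsilon))$ and $D = \mathcal{O}(1)$, and $\ell_0$ at least logarithmic — we may invoke the $(\Delta,\ell_0)$-local rapid mixing bound: writing $\ell = r + D(t_0-t_*)$,
\begin{equation}
  \| T_{[t_*+1,t_0]}^*(O) - \Phi \circ T_{[t_*+1,t_0]}^*(O) \| \leq \big(c\,\ell^{\alpha} e^{-\gamma (t_*+1)} + \Delta\big)\|O\|.
\end{equation}
Now decompose $T_{[1,t_0]}^*(O) = T_{[1,t_*]}^*\circ\Phi\circ T_{[t_*+1,t_0]}^*(O) + T_{[1,t_*]}^*\big(T_{[t_*+1,t_0]}^*(O) - \Phi\circ T_{[t_*+1,t_0]}^*(O)\big)$. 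The first piece is $T_{[1,t_*]}^*$ applied to a (rescaled) identity; since these channels are unital, it is itself a multiple of the identity, hence its expectation value in $\rho^B$ and $\tilde{\rho}^B$ coincide and contribute zero to $\delta_1$. The second piece has operator norm at most $(c\ell^{\alpha}e^{-\gamma(t_*+1)} + \Delta)\|O\|$ since $T_{[1,t_*]}^*$ is norm-nonincreasing; pairing it against $\rho^B - \tilde{\rho}^B$ — whose trace norm is at most $2$ — gives a contribution $\mathcal{O}((c\ell^{\alpha}e^{-\gamma t_*} + \Delta)\|O\|)$. At the same time, Lemma \ref{lemma:state_bath} applied directly to $T_{[1,t_0]}^*(O)$ would give $\mathcal{O}(\epsilon \|O\| (r+Dt_0)^2)$, but that is too weak for large $t_0$; instead I use the split only with the piece that survives, so the $\epsilon$-dependence enters through bounding $|\Tr[(\rho^B-\tilde\rho^B)\,T_{[1,t_*]}^*\Phi T_{[t_*+1,t_0]}^*(O)]|$ — which is exactly zero — meaning the only genuinely $\epsilon$-sized contribution is absent here. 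Re-examining: the $\mathcal{O}(\epsilon\log(1/\epsilon))$ term must come from somewhere, so in fact the right decomposition is to keep $\delta_1$ bounded purely by the mixing error, and then choose $t_*$ so that $c\ell^{\alpha}e^{-\gamma t_*} \leq \epsilon$, i.e. $t_* = \mathcal{O}(\tfrac{1}{\gamma}\log(\ell^{\alpha}/\epsilon)) = \mathcal{O}(\log(1/\epsilon))$ since $\ell = \mathcal{O}(\log(1/\epsilon))$ and $\alpha,\gamma$ are constants — self-consistently, $\ell = r + D(t_0-t_*)$; one must check $t_0 - t_* \le t_* $ is not needed, rather that a valid $t_*$ exists, which it does whenever $t_0 \gtrsim \log(1/\epsilon)$, and when $t_0 \lesssim \log(1/\epsilon)$ one instead applies Lemma \ref{lemma:state_bath} directly to get $\mathcal{O}(\epsilon\log^2(1/\epsilon))\|O\|$, absorbed into the stated bound. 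With this choice the mixing error $c\ell^{\alpha}e^{-\gamma t_*}\|O\| \le \epsilon\|O\|$ and the residual $\Delta$-term accumulates to $\mathcal{O}(\Delta\|O\|)$ from the single split (no $\ell_y$ factor yet — that appears in $\delta_2$), yielding $\delta_1 \le \mathcal{O}((\epsilon + \Delta)\|O\|)$, and the extra $\log(1/\epsilon)$ comes from the case distinction on $t_0$, giving the claimed $\mathcal{O}((\epsilon\log(1/\epsilon)+\Delta)\|O\|)$.

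The main obstacle, and the point requiring the most care, is the self-consistency of the cutoff choice: the support radius $\ell = r + D(t_0 - t_*)$ of the evolved operator feeds into the mixing bound through $\ell^\alpha$, which in turn determines how large $t_*$ must be, which (for fixed $t_0$) determines $t_0 - t_*$. One has to verify that there is a consistent choice with $\ell \le \ell_0$ and $t_* \le t_0$, using that $\ell_0$ is large enough (at least $\Omega(\log(1/\epsilon))$, which holds for the examples), that $D = \mathcal{O}(1)$, and that $r = \mathcal{O}(1)$. The polynomial-times-exponential form $c\ell^\alpha e^{-\gamma t}$ is exactly engineered so that the polynomial prefactor only costs an additional $\log$ factor when solving $c\ell^\alpha e^{-\gamma t_*} = \epsilon$ with $\ell$ itself logarithmic in $1/\epsilon$; tracking this carefully is what produces $\log^2(1/\epsilon)$ rather than $\log(1/\epsilon)$ in the final theorem. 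A secondary subtlety is justifying that $T_{[1,t_*]}^*$ maps (rescaled) identity to (rescaled) identity — i.e. unitality of the $T_t$ — which follows since each $T_B^{BS_t}$ is trace-preserving and hence its dual is unital, and partial trace over $S_t$ composed appropriately preserves this; I would state this as an immediate consequence of the definitions in Eq.~\eqref{eq:bath_dynamics}.
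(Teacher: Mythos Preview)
Your overall strategy --- a locality bound for small $t_0$, a mixing bound for large $t_0$, crossover at $t_0\sim\log(1/\epsilon)$ --- is exactly the paper's, but the execution has errors and the time-splitting at $t_*$ is superfluous.

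The splitting does no work: after noting that $T_{[1,t_*]}^*$ is unital (so the $\Phi$-piece drops), you bound the remainder using only $\|\rho^B-\tilde\rho^B\|_1\le 2$, which yields nothing beyond applying the mixing bound to $T_{[1,t_0]}$ directly (i.e.\ $t_*=0$). You notice this yourself (``Re-examining\ldots''). The paper simply records two bounds on $\delta_1$ valid for every $t_0$: (i) $\mathcal{O}(\epsilon(r+Dt_0))\|O\|$ from Lemma~\ref{lemma:state_bath}, and (ii) $2(cr^\alpha e^{-\gamma t_0}+\Delta)\|O\|$ from local rapid mixing applied directly to $T_{[1,t_0]}$; it then takes the minimum of the two, which as a function of $t_0$ is bounded by its value at the crossover $t^*$ where $\epsilon t^*D \sim r^\alpha e^{-\gamma t^*}$, giving $t^*=\mathcal{O}(\log(1/\epsilon))$.

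Two specific errors in your write-up. First, the length scale $\ell$ in the definition of $\eta^\ell(T_{[t,t']})$ is the radius of the \emph{input} operator $O$, not of the evolved one: so $\ell=r$, not $r+D(t_0-t_*)$. Your self-consistency worry in the final paragraph is therefore spurious; only $r<\ell_0$ is needed, and that is in the hypothesis. Second, the decay exponent for $T_{[t_*+1,t_0]}$ should be the duration, $e^{-\gamma(t_0-t_*)}$, not $e^{-\gamma(t_*+1)}$ (compare the paper's $\delta_2$ proof, where $T_{[n+1,t]}$ gives $e^{-\gamma(t-n-1)}$). A minor third point: the bath is one-dimensional, so a ball of radius $R$ has volume $\mathcal{O}(R)$, not $\mathcal{O}(R^2)$; your direct application of Lemma~\ref{lemma:state_bath} should give $\mathcal{O}(\epsilon(r+Dt_0))\|O\|$, not a square, which is why the final answer is $\epsilon\log(1/\epsilon)$ rather than $\epsilon\log^2(1/\epsilon)$.
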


\begin{proof}
  Note that $T_{[1,t_0]}^*(O)$ is supported on a ball of radius $r+tD$.(cf. Lemma \ref{lemma:locality_preserving}) By Lemma \ref{lemma:state_bath},
  \begin{equation}
    \delta_1 \leq \mathcal{O}(\epsilon \|O \|(r+tD)).
  \end{equation}
  This bound becomes rather lousy at large $t$. In the large $t$ regime, the following bound works better.
  \begin{equation}
    \begin{aligned}
      |\Tr[(\rho^B-\tilde{\rho}^B)T_{[1,t]}^*(O)]| &= |\Tr[(\rho^B-\tilde{\rho}^B) (T_{[1,t]}^*(O) -     \Phi \circ T_{[1,t]}^*(O))]| \\
      &\leq \|\rho^B - \tilde{\rho}^B\|_1 \|T_{[1,t]}^*(O) - \Phi\circ T_{[1,t]}^*(O) \| \\
      &\leq 2c\|O \|r^{\alpha} e^{-\gamma t} + 2\|O\|\Delta.
    \end{aligned}
    \end{equation}
    Both bounds hold for all $t$, so we can take the minimum of the two. The function $e^{-\gamma t}$ decreases monotonically and $tD$ increases monotonically. Therefore, the minimum of the two bounds is bounded from above by $\epsilon(r+t^*D)$, where $\epsilon t^*D = c'r^{\alpha}e^{-\gamma t^*}$, where $c'$ is some numerical constant.
    \begin{equation}
      \begin{aligned}
        t^{*} &=\frac{1}{\gamma}\log \frac{c'r^{\alpha}}{\epsilon  t^* D},
        \end{aligned}
      \end{equation}
      which is bounded from above by $\frac{1}{\gamma} \log \frac{c'r^{\alpha}}{\epsilon D}$ for $t^*\geq 1$ and $1$ if $t^*\leq 1$. 
\end{proof}

\begin{lem}
  Let $T_{[t,t']}$ is $(\Delta, \ell_0)$-locally rapidly mixing with $D=\mathcal{O}(1)$. For an operator $O$ supported on a ball of radius $r=\mathcal{O}(1)$ where $r<\ell_0$,
  \begin{equation}
    \delta_2 \leq \mathcal{O}((\epsilon \log^2(1/\epsilon) + \ell_y \Delta)\|O\|)
  \end{equation}
\end{lem}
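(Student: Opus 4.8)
The plan is to start from the telescoping identity
\begin{equation}
  T_{[1,t]}^* - \tilde{T}_{[1,t]}^* = \sum_{n=1}^t \tilde{T}_{[1,n-1]}^* \circ (T_n^* - \tilde{T}_n^*)\circ T_{[n+1,t]}^*,
\end{equation}
with the conventions $\tilde{T}_{[1,0]}^* = \mathcal{I}$ and $T_{[t+1,t]}^* = \mathcal{I}$, and then to bound each summand applied to $O$ in operator norm, splitting the sum into a ``late'' regime ($n$ close to $t$) where Lemma \ref{lemma:local_approximate_equivalence} gives an $\mathcal{O}(\epsilon)$-per-step bound, and an ``early'' regime (small $n$) where local rapid mixing kills the contribution up to $\Delta$. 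Since $\tilde{\rho}^B$ is a state and $\tilde{T}_{[1,n-1]}^*$ is unital (hence norm-nonincreasing), it suffices to bound $\|(T_n^* - \tilde{T}_n^*)\circ T_{[n+1,t]}^*(O)\|$ for each $n$ and sum; the state $\tilde{\rho}^B$ never needs to be touched again.

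For the late regime, observe that $T_{[n+1,t]}^*(O)$ is supported on a ball of radius $r + (t-n)D$ by Lemma \ref{lemma:locality_preserving}, so applying Lemma \ref{lemma:local_approximate_equivalence} with this radius gives
\begin{equation}
  \|(T_n^* - \tilde{T}_n^*)\circ T_{[n+1,t]}^*(O)\| \leq \mathcal{O}\!\left(\epsilon\|O\|\,(r + (t-n)D + D)^2\right).
\end{equation}
For the early regime, the key trick — exactly as in the sketch — is to insert the depolarizing channel: write $T_{[n+1,t]}^*(O) = \Phi\circ T_{[n+1,t]}^*(O) + \left(T_{[n+1,t]}^*(O) - \Phi\circ T_{[n+1,t]}^*(O)\right)$. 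The first piece is proportional to the identity, and since $T_n^*$ and $\tilde{T}_n^*$ are both unital they agree on it, so it drops out of $T_n^* - \tilde{T}_n^*$. The second piece has norm at most $\eta^{r+(t-n)D}(T_{[n+1,t]})$, and here is where I must be careful: the $(\Delta,\ell_0)$-locally-rapidly-mixing hypothesis requires the radius $r + (t-n)D$ to stay below $\ell_0$. This is where the crossover scale enters — I only apply the mixing bound for $n$ in a window $[t - L, t]$ with $L$ chosen so that $r + LD < \ell_0$; for such $n$, $\eta^{r+(t-n)D}(T_{[n+1,t]}) \leq c(r+(t-n)D)^\alpha e^{-\gamma(n+1)} + \Delta$. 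Since $(T_n^* - \tilde{T}_n^*)$ is the difference of two norm-nonincreasing maps, $\|(T_n^* - \tilde{T}_n^*)\circ T_{[n+1,t]}^*(O)\| \leq 2\big(c(r+(t-n)D)^\alpha e^{-\gamma(n+1)} + \Delta\big)\|O\|$ on this window.

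Now I combine. For each $n$ I take the minimum of the two bounds — the $\mathcal{O}(\epsilon(r+(t-n)D)^2)$ bound and the exponential-plus-$\Delta$ bound. As in the proof of the $\delta_1$ lemma, the crossover between the polynomial-in-$(t-n)$ growth of the $\epsilon$-bound and the exponential decay $e^{-\gamma n}$ happens at $n$ roughly $t - \mathcal{O}(\tfrac1\gamma\log(1/\epsilon))$ (keeping in mind that for $r,D=\mathcal{O}(1)$ the polynomial factors only contribute $\log\log$-type corrections that get absorbed). So: for the $\mathcal{O}(\log(1/\epsilon))$ values of $n$ nearest $t$, each contributes $\mathcal{O}(\epsilon\log^2(1/\epsilon)\|O\|)$ from the Lemma \ref{lemma:local_approximate_equivalence} bound — wait, more precisely each of those $\mathcal{O}(\log(1/\epsilon))$ terms contributes $\mathcal{O}(\epsilon \log^2(1/\epsilon)\|O\|)$ since $(t-n)D = \mathcal{O}(\log(1/\epsilon))$ there, giving a subtotal of $\mathcal{O}(\epsilon\log^3(1/\epsilon))$; to recover the claimed $\epsilon\log^2$ one uses instead the per-step bound $\mathcal{O}(\epsilon(r+(t-n)D)^2)$ summed against the competing exponential, whose geometric-type sum is dominated by its largest term $\mathcal{O}(\epsilon\log^2(1/\epsilon))$ rather than $\log(1/\epsilon)$ times it. For the remaining $n$ within the mixing window, the exponential tail $\sum_n e^{-\gamma n}$ converges to an $\mathcal{O}(\epsilon)$-size contribution plus $\mathcal{O}(L\Delta) = \mathcal{O}(\Delta)$ from the $\Delta$ terms. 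Finally, for $n$ outside the mixing window — i.e. the (at most $\ell_y$) early steps where $r+(t-n)D \geq \ell_0$ — I cannot use mixing, but I can still bound each term crudely; here I must fall back on $\|(T_n^* - \tilde{T}_n^*)\circ T_{[n+1,t]}^*(O)\|$ by noting these are differences of unital maps so each term is $\mathcal{O}(\|O\|)$... which is too weak. The resolution — and this is the step I expect to be the genuine obstacle — is that for these early $n$ one must again use mixing, but at the \emph{fixed} scale $\ell_0$: after $T_{[n+1,t]}^*$ has been applied enough to reach radius $\ell_0$, further composition cannot be handled by locality, so instead one applies $\Phi$ at scale $\ell_0$ at the first time the ball saturates $\ell_0$, accruing a single $\Delta$ per such step, and the count of such steps is at most $\ell_y$, yielding the $\ell_y\Delta$ term. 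Summing the three regimes gives $\delta_2 \leq \mathcal{O}((\epsilon\log^2(1/\epsilon) + \ell_y\Delta)\|O\|)$.

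The main obstacle, then, is bookkeeping the three regimes so that (i) the polynomial prefactors $(r+(t-n)D)^\alpha$ and $(r+(t-n)D)^2$ genuinely only cost logarithmic factors once $r,D = \mathcal{O}(1)$, (ii) the mixing bound is only ever invoked at scales $\leq \ell_0$, and (iii) the $\Delta$-errors are counted once per time step, not once per pair, so their total is $\ell_y\Delta$ and not $\ell_y^2\Delta$. I expect the cleanest route is to define a single crossover time $n^* = t - \lceil \tfrac1\gamma\log(r^\alpha/\epsilon)\rceil$ and treat $n \geq n^*$ by Lemma \ref{lemma:local_approximate_equivalence} and $n < n^*$ by mixing-plus-$\Phi$, checking separately that $r + (t-n^*)D < \ell_0$ holds under the stated hypothesis $r < \ell_0$ together with $D = \mathcal{O}(1)$ and $\epsilon$ small enough, and absorbing the $n < 0$-style edge cases into the $\ell_y\Delta$ term.
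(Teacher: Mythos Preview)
Your overall skeleton---telescoping sum, two competing bounds per summand, crossover at a scale $\mathcal{O}(\log(1/\epsilon))$---is exactly what the paper does. But you have misread the definition of $\eta^{\ell}(T_{[t,t']})$, and this misreading generates almost all of the difficulties you flag.

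In the definition, $\ell$ is the radius of the \emph{input} operator fed into $T_{[t,t']}^*$, not the radius of the output $T_{[t,t']}^*(O)$. In the summand you are bounding $\|T_{[n+1,t]}^*(O) - \Phi\circ T_{[n+1,t]}^*(O)\|$, and here the input is $O$ itself, of radius $r$. Hence the relevant contraction is $\eta^{r}(T_{[n+1,t]})$, not $\eta^{r+(t-n)D}(T_{[n+1,t]})$. Since the hypothesis already gives $r<\ell_0$, the mixing bound applies for \emph{every} $n$, and there is no ``third regime'' where the radius overflows $\ell_0$. Your entire paragraph about saturating $\ell_0$ and accruing $\Delta$'s step by step is unnecessary; the $\ell_y\Delta$ term in the paper arises simply because each of the (at most $\ell_y$) summands contributes one additive $\Delta$ from the mixing bound.

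A second consequence of the same misreading: the exponent in the mixing bound is governed by the \emph{duration} of the channel, so $\eta^{r}(T_{[n+1,t]}) \leq c\,r^{\alpha} e^{-\gamma(t-n-1)} + \Delta$, not $e^{-\gamma(n+1)}$. With the scale fixed at $r=\mathcal{O}(1)$ the polynomial prefactor $r^{\alpha}$ is just a constant, which also removes your worry about $(r+(t-n)D)^{\alpha}$ costing extra logs.

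Finally, on $\log^2$ versus $\log^3$: the paper records the first (locality) bound as $\mathcal{O}(\epsilon\|O\|(t-n-1))$, i.e.\ linear rather than quadratic in $t-n$, because after restricting to the bath the support is one-dimensional and the count of circuit elements in Lemma~\ref{lemma:local_approximate_equivalence} drops a power. Summing a linear bound up to the crossover $t^*=\mathcal{O}(\log(1/\epsilon))$ gives $\mathcal{O}(\epsilon t^{*2})=\mathcal{O}(\epsilon\log^2(1/\epsilon))$. If you keep the quadratic bound you indeed get $\log^3$, which is still fine for the purposes of the theorem but not what is claimed.
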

\begin{proof}
  \begin{equation}
    \delta_2 \leq \sum_{n=1}^t \|T_n^*(T_{[n+1,t]}^*(O)) - \tilde{T}_n^*(T_{[n+1,t]}^*(O)) \|.
  \end{equation}
  The summand can be bounded in two ways. First,
   \begin{equation}
    \|T_n^*(T_{[n+1,t]}^*(O)) - \tilde{T}_n^*(T_{[n+1,t]}^*(O)) \| \leq \mathcal{O}(\epsilon \|O \|(t-n-1)).
  \end{equation}
  This is because $T_{[n+1,t]}^*(O)$ is supported on a ball of radius $r+D(n+1-t)$ by Lemma \ref{lemma:locality_preserving}. By applying Lemma \ref{lemma:local_approximate_equivalence} and restricting the action of the channel to the bath, we arrive at this bound.

  Second, for 
  \begin{equation}
    \begin{aligned}
      \|T_n^*(T_{[n+1,t]}^*(O)) - \tilde{T}_n^*(T_{[n+1,t]}^*(O)) \| &\leq 2\|T_{[n+1,t]}^*(O) - \Phi \circ T_{[n+1,t]}^*(O) \|\\
      &\leq 2c\|O \|r^{\alpha} e^{-\gamma (t-n-1)} + 2\|O\|\Delta,
    \end{aligned}
  \end{equation}
  because the summand is equal to $ \|T_n^*(T_{[n+1,t]}^*(O) - \Phi\circ T_{[n+1,t]}^*(O)) - \tilde{T}^*_n(T_{[n+1,t]}^*(O) - \Phi \circ T_{[n+1,t]}^*(O)) \|$.

  The sum can be bounded as
  \begin{equation}
    \begin{aligned}
      \delta_2 &\leq \|O \| \sum_{n=1}^t \min(\epsilon c' n, ce^{-\gamma n} + 2\Delta) \\
      &\leq \|O \|( \sum_{n=1}^{t*} \epsilon c' n +\sum_{n=t^*}^{t} (ce^{-\gamma n} + 2\Delta)) \\
      &\leq \|O \| (\mathcal{O}(\epsilon t^{*2} + e^{-\gamma t^*}) + \sum_{n=t^*}^t 2\Delta )\\
      &\leq \|O \|(\mathcal{O}(\epsilon t^{*2} + e^{-\gamma t^*}) + 2\Delta \ell_y),
      \end{aligned}
    \end{equation}
    where $c'$ is some constant  $t^*$ is the largest $t$ such that $\epsilon t^* \leq c'e^{-\gamma t^*}$. Note that
    \begin{equation}
      t^* = \frac{1}{\gamma} \log \frac{c'}{\epsilon t^*}.
    \end{equation}
    If $t^*\geq 1$, then $t^*=\mathcal{O}(\log(1/\epsilon))$. Also,
    \begin{equation}
      \epsilon (t^*+1) \geq c' e^{-\gamma (t^*+1)}.
    \end{equation}
    Therefore, $e^{-\gamma t^*} = \mathcal{O}(\epsilon \log(1/\epsilon))$. Combining these bounds,
    \begin{equation}
      \delta_2 \leq \|O \|(\mathcal{O}(\epsilon \log^2(1/\epsilon) + \ell_y \Delta))
      \end{equation}
\end{proof}

Now that we have Theorem \ref{thm:local_rapid_mixing} and Lemma \ref{lemma:Heisenberg}, we can combine these bounds together to conclude that
\begin{thm}
  \label{thm:local_rapid_mixing_stability}
  If $T_{[t,t']}$ is $(\Delta,\ell_0)$-locally rapidly mixing for some $c>0$ with $D= \mathcal{O}(1)$, then for any observable $O$ supported on a ball of radius $r=\mathcal{O}(1)$, $r<\ell_0$,
  \begin{equation}
    |\Tr[\rho O - \tilde{\rho} \tilde{O}]|\leq \mathcal{O}(\epsilon \log^{2}(1/\epsilon) + \ell_y \Delta) \|O \|.
  \end{equation}
\end{thm}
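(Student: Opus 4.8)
The plan is to obtain the statement as the composition of Lemma~\ref{lemma:Heisenberg} with Theorem~\ref{thm:local_rapid_mixing}; essentially all of the genuinely hard work is already contained in those two results, so the task here is to glue them. First I would reduce a general local observable to a Pauli operator. If $O$ is supported on a ball $A$ of radius $r=\mathcal{O}(1)$, expand $O=\sum_P c_P\,P$ over the Pauli operators $P$ supported in $A$; since $|A|=\mathcal{O}(1)$ there are only $\mathcal{O}(1)$ such $P$ and $|c_P|\le\|O\|$, so $\sum_P|c_P|=\mathcal{O}(\|O\|)$. The noisy observable decomposes with the same coefficients, $\tilde O=\sum_P c_P\,\tilde P$, so by the triangle inequality it suffices to prove the bound for a single Pauli $O$ (together with its noisy counterpart $\tilde O$), at the cost of an $\mathcal{O}(1)$ factor that is absorbed into the $\mathcal{O}(\cdot)$.

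Next, fix a Pauli operator $O$ supported on a ball of radius $r$. Its support occupies at most $2r$ consecutive time slices of the interaction graph, so one may choose $t_0$ with $0\le t_0$ and $t_0+2r\le\ell_y$ such that $O$ is supported on the coordinates $t\in[t_0+1,t_0+2r]$. Applying Lemma~\ref{lemma:Heisenberg} then gives
\begin{equation}
  |\Tr[\rho O-\tilde{\rho}\tilde O]|\le \mathcal{O}(\epsilon\, r^3 D^2\|O\|)+\delta,\qquad \delta=\|O\|\sup_{\|O'\|\le1}\bigl|\Tr[\rho^B T_{[1,t_0]}^*(O')]-\Tr[\tilde{\rho}^B\tilde{T}_{[1,t_0]}^*(O')]\bigr|,
\end{equation}
where the supremum ranges over $O'\in\mathcal{B}(\mathcal{H}_B)$ supported on a ball of radius $\mathcal{O}(Dr)$. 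By the defining property of the dual map, $\Tr[\rho^B T_{[1,t_0]}^*(O')]=\Tr[T_{[1,t_0]}(\rho^B)\,O']$, and likewise for the noisy quantities, so $\delta$ is exactly $\|O\|$ times the quantity estimated in Theorem~\ref{thm:local_rapid_mixing}, with the local bath observable $O'$ playing the role of the observable there.

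It remains to invoke Theorem~\ref{thm:local_rapid_mixing}. Its hypotheses transfer directly: $D=\mathcal{O}(1)$, $T_{[t,t']}$ is $(\Delta,\ell_0)$-locally rapidly mixing by assumption, and $O'$ is supported on a ball of radius $\mathcal{O}(Dr)=\mathcal{O}(1)$, which is smaller than $\ell_0$ provided the constant implicit in the present theorem's hypothesis $r<\ell_0$ is taken small enough relative to $D$. The theorem then yields $\delta\le\mathcal{O}\bigl(\epsilon\log^2(1/\epsilon)+\ell_y\Delta\bigr)\|O\|$. Finally, since $r=\mathcal{O}(1)$ and $D=\mathcal{O}(1)$, the term $\mathcal{O}(\epsilon r^3D^2\|O\|)$ is $\mathcal{O}(\epsilon\|O\|)$ and is dominated by $\mathcal{O}(\epsilon\log^2(1/\epsilon)\|O\|)$; adding this to the bound on $\delta$ and summing over the $\mathcal{O}(1)$ Pauli components gives $|\Tr[\rho O-\tilde{\rho}\tilde O]|\le\mathcal{O}(\epsilon\log^2(1/\epsilon)+\ell_y\Delta)\|O\|$, as claimed.

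As for the main obstacle: there is little of substance at this stage, because the nontrivial content lives in Lemma~\ref{lemma:Heisenberg} (controlling the finitely many noisy circuit elements that act before $O$ is pushed onto the bath) and in Theorem~\ref{thm:local_rapid_mixing} (balancing a ``spreading'' bound of order $\epsilon\times(\text{time})$ against a ``mixing'' bound of order $e^{-\gamma(\text{time})}+\Delta$). The only points needing care in the gluing are the locality bookkeeping --- confirming that the operator produced by the reduction is still supported on an $\mathcal{O}(1)$-radius ball, so that local rapid mixing applies at that scale and the constants relating $r$, $D$ and $\ell_0$ are mutually consistent --- and checking that the Pauli-decomposition overhead for a bounded region really is $\mathcal{O}(1)$.
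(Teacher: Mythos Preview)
Your proposal is correct and matches the paper's approach: the paper simply states that Theorem~\ref{thm:local_rapid_mixing_stability} follows by combining Lemma~\ref{lemma:Heisenberg} with Theorem~\ref{thm:local_rapid_mixing}, which is exactly the gluing you describe. Your additional reduction from a general local observable to Pauli operators is a detail the paper glosses over (Lemma~\ref{lemma:Heisenberg} is indeed stated only for Pauli $O$), so if anything you are being slightly more careful.
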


\subsection{Comments on the bound}
In Theorem \ref{thm:local_rapid_mixing_stability}, one may wonder why we need the $\ell_y \Delta$ term, and whether this term is controlled in general. To answer the first question, this term is absolutely necessary to describe topologically ordered systems that are away from the fixed-point wavefunction. If the system size is finite, the expectation value of local observables over different ground states differ, albeit only by an exponentially small amount. The term $\Delta$ upper bounds this difference, and as such, cannot be made to be exactly zero. Nevertheless, we do not expect this to pose a problem because $\Delta$ is expected to decay exponentially in the system size, effectively suppressing the $\ell_y$ contribution.

\section{Examples \label{section:examples}}
In this Section, we provide two nontrivial examples that satisfy the local rapid mixing condition. We study a circuit that prepares the ground state of the toric code\cite{Kitaev2003} with open boundary condition\cite{Bravyi1998} and a circuit that prepares a trivial state.

\subsection{Surface Code}
Let us consider toric code with an open boundary condition, also known as the surface code. For the notational convenience, we decided to place the qubits on the vertices as opposed to the edges. The local terms of the Hamiltonian consist of four types: four-body plaquette terms of $X/Z$ type and two-body boundary terms of $X/Z$ type. As an example, a $5\times 5$ surface code is described in FIG.\ref{fig:RSC}. Below, we show that the surface code on a $\ell \times \ell$ lattice is $(0,\ell-1)-$locally rapidly mixing. The argument is based on the stabilizer formalism\cite{Gottesman1997}.
\begin{figure}[h]
\begin{tikzpicture}[scale=0.9]
\foreach \x in {0,...,4}
{
  \foreach \y in {0,...,4}
  {
    \node[circle, draw=black, fill=black, scale=0.5]() at (1.0 * \x, 1.0 * \y) {};
  }
}
\foreach \x in {0,...,4}
{
  \draw (\x, 0)--(\x,4);
}
\foreach \y in {0,...,4}
{
  \draw (0,\y)--(4,\y);
}
\foreach \x in {0,2}
{
  \foreach \y in {0,2}
  {
    \node[] at (\x+0.5,\y+0.5) {X};
  }
}
\foreach \x in {1,3}
{
  \foreach \y in {1,3}
  {
    \node[] at (\x+0.5,\y+0.5) {X};
  }
}
\foreach \x in {0,2}
{
  \foreach \y in {1,3}
  {
    \node[] at (\x+0.5,\y+0.5) {Z};
  }
}
\foreach \x in {1,3}
{
  \foreach \y in {0,2}
  {
    \node[] at (\x+0.5,\y+0.5) {Z};
  }
}
\foreach \xx in {1,3}
{
  \draw[domain=180:360] plot ({\xx +0.5 + 0.5 * cos(\x)}, {0.5 * sin(\x)});
  \draw[domain=0:180] plot ({\xx -0.5 + 0.5 * cos(\x)}, {0.5 * sin(\x)+4});
  \node[above] at (\xx +0.5, -0.5 ) {X};
  \node[below] at (\xx -0.5, 4.5) {X};
}
\foreach \yy in {1,3}
{
  \draw[domain=-90:90] plot ({4 + 0.5 * cos(\x)}, {\yy + 0.5 + 0.5 * sin(\x)});
  \draw[domain=90:270] plot ({ 0.5 * cos(\x)}, {\yy - 0.5 + 0.5 * sin(\x)});
  \node[left] at (4.5, \yy + 0.5) {Z};
  \node[right] at (-0.5, \yy - 0.5) {Z};
}

\end{tikzpicture}
\caption{Toric code with an open boundary condition, rotated by $\frac{\pi}{4}$, on a $5\times 5$ lattice. The qubits are located at the vertices(black dots). In the bulk, the Hamiltonian consists of plaquette terms of $X$ and $Z$ type. Two-body terms, i.e., $XX$ and $ZZ$, are placed on the physical boundary.  \label{fig:RSC}}
\end{figure}
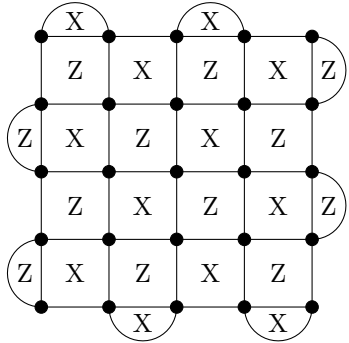

Any product state can be gradually transformed to the ground state, row-by-row. Intuitively, the process can be thought as measuring each of the local terms in the Hamiltonian that are supported on two contiguous rows, and applying an appropriate set of single-qubit gates so that all the plaquette and boundary terms are satisfied.\cite{Dennis2001} The circuits are depicted in FIG.\ref{fig:prepcircuit_toric_code_Z} and \ref{fig:prepcircuit_toric_code_X}. The boundary terms can be treated in a similar manner, by removing qubit $1$ and $2$, as well as the CNOTs that act on these qubits. By setting the $Z$-type stabilizers to $+1$ in parallel, and then setting $X$-type stabilizers to $+1$ in parallel, one can extend the code on $n$ rows to a larger code on $(n+1)$ rows.
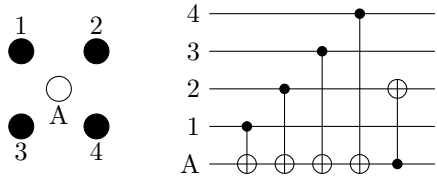
\begin{figure}[h]
\begin{tikzpicture}
\foreach \x in {0,1}
{
  \foreach \y in {0,1}
  {
    \node[circle, draw=black, fill=black]() at (1.0 * \x, 1.0 * \y) {};
  }
}
\node[circle, draw=black, fill=white]() at (0.5,0.5) {};
\node[above] at (0,1.1) {1};
\node[above] at (1,1.1) {2};
\node[below] at (0,-0.1) {3};
\node[below] at (1,-0.1) {4};
\node[below] at (0.5,0.4){A};
\begin{scope}[xshift=2.5cm,yshift=-0.5cm]
\foreach \y in {0,...,4}
{
 \draw (0,\y*0.5)--(3,\y*0.5);
}
\node[left] at (0,0) {A};
\node[left] at (0,0.5) {1};
\node[left] at (0,1.0) {2};
\node[left] at (0,1.5) {3};
\node[left] at (0,2.0) {4};
\foreach \x in {1,...,4}
{
  \draw( \x*0.5, -0.125)--(\x*0.5, \x*0.5);
  \draw (\x*0.5, 0 ) circle (0.125cm);
  \filldraw (\x*0.5, \x*0.5) circle (0.0625cm);
}
\draw (2.5,0)--(2.5,1.125);
\draw (2.5,1.0) circle (0.125cm);
\filldraw (2.5, 0) circle (0.0625cm);
\end{scope}
\end{tikzpicture}
\caption{Subroutine for initializing the stabilizer of $Z$ type. The circuit effectively measure the $Z$-type stabilizer and apply the correction so that the stabilizer yields $+1$, no matter what the initial state is. The ancilla($A$) is initialized to $\ket{0}$.\label{fig:prepcircuit_toric_code_Z}}
\end{figure}
\begin{figure}[h]
\begin{tikzpicture}
\foreach \x in {0,1}
{
  \foreach \y in {0,1}
  {
    \node[circle, draw=black, fill=black]() at (1.0 * \x, 1.0 * \y) {};
  }
}
\node[circle, draw=black, fill=white]() at (0.5,0.5) {};
\node[above] at (0,1.1) {1};
\node[above] at (1,1.1) {2};
\node[below] at (0,-0.1) {3};
\node[below] at (1,-0.1) {4};
\node[below] at (0.5,0.4){A};
\begin{scope}[xshift=2.5cm,yshift=-0.5cm]
\foreach \y in {0,...,4}
{
 \draw (0,\y*0.5)--(3,\y*0.5);
}
\node[left] at (0,0) {A};
\node[left] at (0,0.5) {1};
\node[left] at (0,1.0) {2};
\node[left] at (0,1.5) {3};
\node[left] at (0,2.0) {4};
\foreach \x in {1,...,4}
{
  \draw( \x*0.5, 0)--(\x*0.5, \x*0.5+0.125);
  \draw (\x*0.5, \x*0.5 ) circle (0.125cm);
  \filldraw (\x*0.5, 0) circle (0.0625cm);
}
\draw (2.5,-0.125)--(2.5,1.0);
\filldraw (2.5, 1.0) circle (0.0625cm);
\draw (2.5, 0 ) circle (0.125cm);
\end{scope}
\end{tikzpicture}
\caption{Subroutine for initializing the stabilizer of $X$ type. The circuit effectively measure the $X$-type stabilizer and apply the correction so that the stabilizer yields $+1$, no matter what the initial state is. The ancilla($A$) is initialized to $\ket{+}=(\ket{0}+\ket{1})/\sqrt{2}$.\label{fig:prepcircuit_toric_code_X}}
\end{figure}
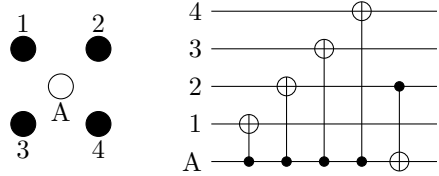

Note that the circuits that set the same type of stabilizers on different plaquettes do not overlap with each other. This implies that they can be run in parallel. Therefore, each row can be prepared by a bounded-depth circuit, by first setting the $X$-type stabilizers to the $+1$ eigenstate and then setting the $Z$-type stabilizers to the $+1$ eigenstate.

How does this circuit fit into our framework? At time $t\leq \ell_y-2$, the system qubits are the physical qubits supported on the $(t)$-th row, the bath qubits are the physical qubits supported on the $(t+1)$-th row, and the sink qubits are the physical qubits supported on the $(t+2)$-th row and the ancillas that appear in FIG.\ref{fig:prepcircuit_toric_code_Z} and \ref{fig:prepcircuit_toric_code_X}. At the beginning of each time step, the bath qubit is swapped with the (initialized) system qubit, and then the circuits in FIG.\ref{fig:prepcircuit_toric_code_Z} and \ref{fig:prepcircuit_toric_code_X} are applied in order to set the stabilizers supported on the three contiguous rows to be $+1$. Then the sink qubits are initialized. For $t=\ell_y-1$ and  $\ell_y$, stabilizers supported on two contiguous rows are set to $+1$. The rest of the protocol, i.e., initialization of the system qubit and the sink qubits, is identical.

The state of the bath becomes a reduced density matrix of one of the ground states over a single row of physical qubits, \emph{independent of the state of the bath qubit at the previous time step.} To see why, note that applying the circuits in FIG.~\ref{fig:prepcircuit_toric_code_Z} and \ref{fig:prepcircuit_toric_code_X} to three contiguous rows of qubits set the stabilizer generators to be $+1$ independent of the state over the physical qubits. This is because (i) each of the circuits are constructed in such a way that, if the stabilizer is not $+1$, then a correction operation(the last CNOT gate) is performed to flip its value to $+1$ and (ii) these circuits do not alter the value of the stabilizer generators in the previous step. This means that the stabilizers surrounding the middle row middle row are all set to $+1$, independent of the previous history. Since this row corresponds to the bath qubits, the state of the bath becomes a reduced density matrix of one of the ground states over a single row of physical qubits. 

This implies that the circuit is $(0,\ell_x-1)$-locally rapidly mixing. Except for the logical operator, which is the string of Pauli-$Z$ operator along the row, every Pauli operator supported on a single row anticommutes with the stabilizer generators surrounding that row. Therefore, for any Pauli operator $P$ supported on a single row, with the exception of the logical operator and the identity operator,
\begin{equation}
T_n^*(P) = 0,
\end{equation}
which subsequently implies that $\eta^{\ell}(T_{[t,t']})=0$ for $\ell<\ell_x$.

\subsection{Trivial state}
What happens if the underlying state is a trivial product state? Similar to the preparation of the surface code, at time $t$, the system qubits are the physical qubits supported on the $(t)$-th row and the bath qubits are the physical qubits supported on the $(t+1)$-th row. After the system qubits are measured, they are initialized and then swapped with the bath qubits. Then a local unitary transformation is applied in the bath qubits to prepare a state of the next row. In the bath, this amounts to preparing a fixed state at each $t$, say $\rho^B_t$. The dual of this map is $O \to \Phi(\rho^B_t O)$, where $\Phi$ is the completely depolarizing channel. Therefore, again the local rapid mixing condition is satisfied. 

It is important to note that the local rapid mixing condition is not satisfied if we do not swap the bath and the system qubits. Then the action of the circuit on the bath is unitary, and as such, the condition cannot be satisfied unless the unitary is trivial. The state preparation remains to be robust to noise for the trivial state, but this shows that the local rapid mixing condition is really a statement about the circuit, not the underlying state. It will be interesting if there is a condition that is formulated in terms of the underlying state and not the circuit, such that ensures the stability of the local expectation values.

\section{Discussion\label{section:Discussion}}
We introduced a notion of noise resilience for large-depth quantum circuits. Even without error correction, noise can affect the expectation values of local observables at most by an amount comparable to the noise rate, independent of the system size. We provided concrete examples of such circuits.

The fact that such a circuit exists is interesting in its own right, because it is counter to the intuition that the effect of noise on large-depth quantum circuits are uncontrolled. What is even more remarkable is that certain circuits that prepare ground states of realistic physical systems possess this property. An immediate corollary is that a low-energy state of such systems can be prepared on a noisy quantum simulator. Furthermore, this state can reproduce the expectation values of all the local observables up to a fixed precision.

The main takeaway message should be that certain large-depth quantum circuits are resilient to noise. By identifying a class of such circuits, we can explore the possibilities offered  by a noisy quantum computer that lie beyond the paradigm of short-depth quantum circuits.

We end with some open problems to pursue.
\begin{enumerate}
\item  What kind of physical states can be prepared by a circuit that obeys a nontrivial local rapid mixing condition? We have shown that the trivial state and the surface code belong to this family, but a larger class of models\cite{Levin2005} may be amenable to a similar analysis.

\item Can we noise-resiliently prepare ground states that are away from the fixed point? If two states are adiabatically connected to each other, then they can be mapped into each other by the so called quasi-adiabatic continuation\cite{Hastings2005}. While there is a sense in which this preserves locality, there is no known method to approximate such map by a finite-depth local unitary operation. It will be interesting to understand how our conclusion changes.

\item Can we arrive at the same conclusion just from the property of the state? It is important to remember that noise-resilience in our context is a statement about a \emph{circuit that prepares the ground state}, and not the ground state itself. Decay of correlation in the state is bounded in terms of the properties of the operator $T_{[t,t']}$, but the reverse direction generally does not work. It will be nice to know if one can arrive at the same conclusion assuming that correlation decays exponentially in the ground state.

\item Given a Hamiltonian, suppose we found a low-energy configuration whose energy per site is $\mathcal{O}(\epsilon)$ away from the ground state. Does this state reliably approximate the ground state local expectation values? Since the number of states at that energy is exponentially large in the system size, the answer is not completely clear. It will be interesting to formulate a condition under which this is true. The eigenstate thermalization hypothesis\cite{Deutsch1991,Srednicki1994}, or alternatively, a recently proposed notion of canonical universality\cite{Dymarsky2017} may prove useful. 
\end{enumerate}
\begin{acknowledgments}
I thank Kristan Temme and Jay Gambetta for helpful discussions.
\end{acknowledgments}

\bibliography{bib}

\end{document}